\documentclass[12pt]{article}%
\usepackage{prefix-private}

\bibliography{max_density}%

\title{Approximating Densest Subgraph in Geometric Intersection
   Graphs}

\author{%
   Sariel Har-Peled%
   \thanks{Department of Computer Science; University of Illinois; 201
      N. Goodwin Avenue; Urbana, IL, 61801, USA;
      \href{mailto:spam@illinois.edu}{sariel@illinois.edu};
      \url{http://sarielhp.org/}.}%
   \and%
   Saladi Rahul%
   \thanks{Indian Institute of Science (IISc);
      \url{saladi@iisc.ac.in}. }%
}

   \date{\today}%

\begin{document}

\maketitle

\begin{abstract}
    For an undirected graph $\mathsf{G}=(\mathsf{V}, \mathsf{E})$, with $n$ vertices and $m$
    edges, the \emph{densest subgraph} problem, is to compute a subset
    $S \subseteq \mathsf{V}$ which maximizes the ratio
    ${\left| {\mathsf{E}_S} \right|} / {\left| {S} \right|}$, where $\mathsf{E}_S \subseteq \mathsf{E}$ is
    the set of all edges of $\mathsf{G}$ with endpoints in $S$.  The densest
    subgraph problem is a well studied problem in computer science.
    Existing exact and approximation algorithms for computing the
    densest subgraph require $\Omega(m)$ time.  We present near-linear
    time (in $n$) approximation algorithms for the densest subgraph
    problem on \emph{implicit} geometric intersection graphs, where
    the vertices are explicitly given but not the edges.  As a
    concrete example, we consider $n$ disks in the plane with
    arbitrary radii and present two different approximation
    algorithms.
\end{abstract}

\section{Introduction}

Given an undirected graph $\mathsf{G}=(\mathsf{V},\mathsf{E})$, the \emphi{density} of a set
$S \subseteq V(\mathsf{G})$ is ${\left| {\mathsf{E}_S} \right|}/{\left| {S} \right|}$, where each
edge in $\mathsf{E}_S \subseteq \mathsf{E}\mleft({\mathsf{G}}\mright)$ has both its vertices in $S$. In
the \emphw{densest subgraph problem}, the goal is to find a subset of
$V(\mathsf{G})$ with the maximum density. Computing the densest subgraph is a
primitive operation in large-scale graph processing, and has found
applications in mining closely-knit communities~\cite{cs-dseac-12},
link-spam detection~\cite{gkt-dldsm-05}, and reachability and distance
queries~\cite{chkz-rdq2l-03}.  See~\cite{bkv-dssm-12} for a detailed
discussion on the applications of densest subgraph, and
\cite{lmfb-sdsp-23} for a survey on the recent developments on densest
subgraph.

\paragraph*{Exact algorithms for densest subgraph.}

Unlike the (related) problem of computing the largest clique (which is
\NPHard), the densest subgraph can be computed (exactly) in polynomial
time.  Goldberg~\cite{g-fmds-84} show how to reduce the problem to
$O(\log n)$ instances of $s{-}t$ min-cut problem. Gallo \textit{et~al.}\xspace
\cite{ggt-fpmfa-89} improved the running time slightly by using
parametric max flow computation.  Charikar~\cite{c-gaafd-00} presented
an LP based solution to solve the problem, for which Khuller and
Saha~\cite{ks-fds-09} gave a simpler rounding scheme.

\paragraph*{Approximation algorithms for densest subgraph.} The exact
algorithms described above require solving either an LP or an $s{-}t$
min-cut instance, both of which are relatively expensive to compute.
To obtain a faster algorithm, Charikar~\cite{c-gaafd-00} analyzed a
$2$-approximation algorithm which repeatedly removes the vertex with
the smallest degree and calculates the density of the remaining
graph. This algorithm runs in linear time. After that, Bahmani \textit{et~al.}\xspace
\cite{bgm-epgam-14} used the primal-dual framework to give a
$(1+\varepsilon)$-approximation algorithm which runs in
$O\mleft({ (m/\varepsilon^2)\log n}\mright)$ time.  The problem was also studied in the
streaming model the focus is on approximation using bounded space.
McGregor \textit{et~al.}\xspace \cite{mtvv-dsdgs-15} and Esfandiari \textit{et~al.}\xspace
\cite{ehw-baaus-16} presented a $(1+\varepsilon)$-approximation streaming
algorithm using roughly $\widetilde{O}(\left| {\mathsf{V}} \right|)$ space.  There
has been recent interest in designing {\em dynamic} algorithms for
approximate densest subgraph, where an edge gets inserted or deleted
in each time step \cite{bhnt-stamd-15,sw-nfdds-20, els-edsc-15}.

\paragraph*{Geometric intersection graphs.} The geometric intersection
graph of a set $\mathcal{D}$ of $n$ objects is a graph $G=(V,E)$, where each
object in $\mathcal{D}$ corresponds to a unique vertex in $V$, and an edge
exists between two vertices if and only if the corresponding objects
intersect.  Further, in an {\em implicit geometric intersection
   graph}, the input is only the set of objects $\mathcal{D}$ and {\em not}
the edge set $\mathsf{E}$, whose size could potentially be quadratic in terms
of $\left| {\mathsf{V}} \right|$.

Unlike general graphs, the geometric intersection graphs typically
have more structure, and as such computing faster approximation
algorithms~\cite{ap-naghs-14,chq-flbag-20}, or obtaining better
approximation algorithms on implicit geometric intersection graphs has
been an active field of research.

One problem closely related to the densest subgraph problem is that of
finding the maximum clique.  Unlike the densest subgraph problem, the
maximum clique problem is NP-hard for various geometric intersection
graphs as well (for e.g., segment intersection
graphs~\cite{ccl-cprig-13}).  However, for the case of unit-disk
graphs, an elegant polynomial time solution by Clark \textit{et~al.}\xspace
\cite{ccj-udg-90} is known.

\begin{figure}[h]
    \centering%
    \includegraphics{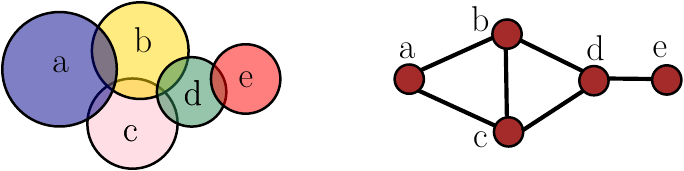}
    \caption{Five disks and their corresponding graph. The densest
       subgraph is $\{a,b,c,d\}$ with density $5/4$.}
    \label{fig:gig}
\end{figure}

\paragraph{Motivation.}
Densest subgraph computation on geometric intersection graphs can help
detect regions which have strong cellular network coverage, or have
many polluting factories.  The region covered (resp., polluted) by
cellular towers (resp., or factories) can be represented by disks of
varying radii.

\subsection{Problem statement and our results}

In this paper we study the densest subgraph problem on implicit
geometric intersection graphs, and present near-linear time (in terms
of $\left| {\mathsf{V}} \right|$) approximation algorithms.

\paragraph{From reporting to approximate counting/sampling.}

We show a reduction from (shallow) range-reporting to approximate
counting/sampling. Previous work on closely related problem includes
the work by Afshani and Chan \cite{ac-ohrrt-09} (that uses shallow
counting queries), and the work by Afshani \textit{et~al.}\xspace
\cite{ahz-gacor-10}. The reduction seems to be new, and should be
useful for other problems.  See \hyperref[sec:app-sampling]{Section~\ref*{sec:app-sampling}} and
\hyperref[theo:gen-red-I]{Theorem~\ref*{theo:gen-red-I}} for details.

Importantly, this data-structure enables us to sample
$(1\pm \varepsilon)$-uniformly a disk from the set of disks intersecting a
given disk.

\paragraph{The application.}

For the sake of concreteness, we consider the case of $n$ disks (with
arbitrary radii) lying in the plane and present two different
approximation algorithms.  See \hyperref[fig:gig]{Figure~\ref*{fig:gig}}.

\paragraph{A $(2+\varepsilon)$-approximation.}

Our first approximation algorithm uses the greedy strategy of removing
disks of low-degree from the intersection graph. By batching the
queries, and using the above data-structure, we get a
$(2+\varepsilon)$-approximation for the densest subset of disks in time
$O_\varepsilon( n \log^3 n)$, where $O_\varepsilon$ hides constants polynomial in
$1/\varepsilon$.

\paragraph{A $(1+\varepsilon)$-approximation.}

A more promising approach is to randomly sample edges from the
intersection graph, and then apply known approximation
algorithms. This requires some additional work since unlike previous
work, we can only sample approximately in uniform, see
\hyperref[sec:expensive]{Section~\ref*{sec:expensive}} for details.  The running time of the new algorithm
is $O_\varepsilon( n\log^2 n \log \log n)$ which is faster than the first
inferior $(2+\varepsilon)$-approximation. The results are summarized in
\hyperref[fig:results]{Figure~\ref*{fig:results}}.

\begin{figure}
    \centering%
    \begin{tabular}{c|l|c}
      \hline
      Approximation
      & Running time
      & Ref
      \\%
      \hline%
      $2+\varepsilon$
      &%
        \begin{math}
            O\bigl(%
            \frac{n\log^2n}{\varepsilon^2}%
            (  \tfrac{1}{\varepsilon^2} + \log n)%
            \bigr)\Bigr.
        \end{math}
      &
        \hyperref[theo:first]{Theorem~\ref*{theo:first}}
      \\%
      \hline%
      $1+\varepsilon$
      &
        \begin{math}
            O\bigl(%
            \frac{n\log^2n}{\varepsilon^2} ( \tfrac{1}{\varepsilon^2} + \log\log n)
            \bigr)\Bigr.
        \end{math}
      &
        \hyperref[theo:second]{Theorem~\ref*{theo:second}}
      \\
      \hline%
    \end{tabular}
    \caption{Our results.}
    \label{fig:results}
\end{figure}

\section{Preliminaries}

\subsection{Definitions}

In the following, $\mathcal{D}$ denotes a (given) set of $n$ objects (i.e.,
disks).  For $S \subseteq \mathcal{D}$ and $u \in \mathcal{D}$, we use the
shorthands $S + u = S \cup \{ u\}$ and $S - u = S \setminus \{ u\}$.
For $\varepsilon \in (0,1)$ and a real number $\alpha > 0$, let
$(1\pm\varepsilon)\alpha$ denote the interval
$\bigl((1-\varepsilon) \alpha, (1+\varepsilon)\alpha\bigr)$.  Throughout, a
statement holds \emphi{with high probability}, if it holds with
probability at least $1-n^{-c}$, where $c$ is a sufficiently large
constant.

\begin{observation}
    \label{observation:eps:games}%
    \begin{compactenumI}
        \smallskip%
        \item For any $\varepsilon$, we have $\frac{1}{1+\varepsilon}\geq 1-\varepsilon$.
        \smallskip
        \item For $\varepsilon \in (0,1/2)$, we have
        \begin{math}
            \frac{1}{1\pm \varepsilon} =%
            \bigl( 1/(1+\varepsilon), 1/(1-\varepsilon)\bigr)%
            \subseteq%
            1\pm 2\varepsilon
        \end{math}
        since
        \begin{math}
            1-\varepsilon \leq \frac{1}{1+\varepsilon} \leq \frac{1}{1-\varepsilon} \leq
            1+2\varepsilon.
        \end{math}
        \smallskip%
        \item For $\varepsilon \in (0,1/3)$, we have
        $(1\pm \varepsilon)^2 \subseteq (1\pm 3\varepsilon)$.  \smallskip

        \item For $\varepsilon \in (0,1)$ and constants $c, c_1$ and $c_2$,
        such that $c \geq c_1c_2$, we have
        \begin{math}
            (1\pm c_1\varepsilon/c)(1\pm c_2\varepsilon/c)%
            \subseteq%
            1\pm \frac{c_1+c_2+1}{c}\varepsilon
        \end{math}%
        \footnote{Indeed,
           \begin{math}
               (1 + c_1\varepsilon/c)(1 + c_2\varepsilon/c)%
               \leq%
               1 + (c_1/c + c_2/c + c_1c_2/c^2)\varepsilon \leq%
               1 + (c_1 + c_2 + 1)\varepsilon/c.
           \end{math}%
        }.
    \end{compactenumI}
\end{observation}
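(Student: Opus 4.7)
The plan is to dispatch each of the four parts of the observation by direct algebraic manipulation, since all are elementary inequalities about rational functions of $\varepsilon$ near $0$; nothing beyond arithmetic is needed, and the four parts are essentially independent, though (II) will reuse (I).

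For part~(I), I would clear denominators. Since the relevant range has $1+\varepsilon > 0$, the claimed inequality is equivalent to $1 \geq (1-\varepsilon)(1+\varepsilon) = 1-\varepsilon^2$, which always holds. Part~(II) has two things to check: the displayed set equality is just unwinding the notation $1\pm\varepsilon$, so only the containment is substantive. The lower endpoint $\frac{1}{1+\varepsilon}\geq 1-\varepsilon$ is immediate from~(I). For the upper endpoint I would multiply $\frac{1}{1-\varepsilon}\leq 1+2\varepsilon$ through by $1-\varepsilon>0$ to reduce it to $1\leq (1-\varepsilon)(1+2\varepsilon) = 1+\varepsilon(1-2\varepsilon)$, which holds exactly because the hypothesis $\varepsilon\in(0,1/2)$ forces $1-2\varepsilon>0$.

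For part~(III) I would simply expand the squares: the maximum of $(1\pm\varepsilon)^2$ is $(1+\varepsilon)^2 = 1+2\varepsilon+\varepsilon^2$, which is at most $1+3\varepsilon$ whenever $\varepsilon\leq 1$, and the minimum is $(1-\varepsilon)^2 = 1-2\varepsilon+\varepsilon^2 \geq 1-2\varepsilon \geq 1-3\varepsilon$; the stronger hypothesis $\varepsilon\in(0,1/3)$ is more than enough. For part~(IV) the calculation in the footnote is already the plan: expand $(1\pm c_1\varepsilon/c)(1\pm c_2\varepsilon/c)$ and bound the cross term $c_1 c_2 \varepsilon^2/c^2$ by $\varepsilon/c$, using $c\geq c_1 c_2$ and $\varepsilon\leq 1$. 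The symmetric lower bound follows from the same expansion with signs flipped on the linear terms while the $\varepsilon^2$ term stays nonnegative.

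There is no real obstacle; the only point of mild care, common to parts~(I), (II), and (IV), is to confirm that the factor by which one clears a denominator is positive before doing so, so that the direction of the inequality is preserved. This is automatic in the stated ranges of $\varepsilon$.
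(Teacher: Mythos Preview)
Your proposal is correct and matches the paper's approach exactly: the observation is stated without a separate proof, with the only justifications being the inline chain of inequalities for part~(II) and the footnoted expansion for part~(IV), both of which you reproduce. Your added verifications for parts~(I) and~(III) are the obvious one-line computations the paper leaves implicit.
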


\begin{defn}
    Given a set of objects $\mathcal{D}$ (say in $\mathbb{R}^d$), their
    \emphi{intersection graph} has an edge between two objects if and
    only if they intersect. Formally,
    \begin{equation*}
        \mathsf{G}_{\cap \mathcal{D}}
        =%
        \mleft({\mathcal{D},
           \left\{  u v \;\middle\vert\; u,v \in \mathcal{D}, \text{ and } u \cap v
              \neq \emptyset  \right\} }\mright).
    \end{equation*}
\end{defn}

\begin{defn}
    For a graph $\mathsf{G}$, and a subset $S \subseteq \mathsf{V}\mleft({\mathsf{G}}\mright)$, let
    \begin{equation*}
        \mathsf{E}_{S}
        =
        \mathsf{E}_{S}\mleft({\mathsf{G}}\mright)
        =
        \left\{  uv \in \mathsf{E}\mleft({\mathsf{G}}\mright) \;\middle\vert\; u,v \in S \right\}.
    \end{equation*}
    The \emphi{induced subgraph} of $\mathsf{G}$ over $S$ is
    $\mathsf{G}_{S} = (S, \mathsf{E}_{S})$, and let $\mathsf{m}\mleft({S}\mright) = \left| {\mathsf{E}_{S}} \right|$
    denote the number of edges in this subgraph.

\end{defn}

\begin{defn}
    For a set $S \subseteq \mathsf{V}\mleft({\mathsf{G}}\mright)$, its \emphi{density} in $\mathsf{G}$ is
    \begin{math}
        {\nabla}\mleft({S}\mright) =%
        \nabla_{\!\!\mathsf{G}}\mleft({S}\mright) =%
        \mathsf{m}\mleft({\mathsf{G}_{S}}\mright) / \left| { S} \right|,
    \end{math}
    where $\mathsf{m}\mleft({\mathsf{G}_{S}}\mright)$ is the number of edges in $\mathsf{G}_{S}$.
    Similarly, for a set of objects $\mathcal{D}$, and a subset
    $S \subseteq \mathcal{D}$, the \emphi{density} of $S$ is
    ${\nabla}\mleft({S}\mright) = \mathsf{m}\mleft({\mathsf{G}_{\cap S}}\mright) / \left| { S} \right|$.
\end{defn}

\begin{defn}
    For a graph $\mathsf{G}$, its \emphi{max density} is the quantity
    \begin{math}
        \mathcalb{d}\mleft({\mathsf{G}}\mright)
        = \max_{S \subseteq\mathsf{V}\mleft({\mathsf{G}}\mright)} {\nabla}\mleft({S}\mright),
    \end{math}
    and analogously, for a set of objects $\mathcal{D}$, its \emphi{max
       density} is
    \begin{math}
        \mathcalb{d}\mleft({\mathcal{D}}\mright)
        = \max_{S \subseteq\mathcal{D}} {\nabla}\mleft({S}\mright),
    \end{math}
\end{defn}

The problem at hand is to compute (or approximate) the maximum density
of a set of objects $\mathcal{D}$.  If a subset $S$ realizes this quantity,
then it is the \emphi{densest subset} of $\mathcal{D}$ (i.e.,
$\mleft({\mathsf{G}_{\cap \mathcal{D}}}\mright)_{S}$ is the densest subgraph of
$\mathsf{G}_{\cap \mathcal{D}}$). One can make the densest subset unique, if there are
several candidates, by asking for the lexicographic minimal set
realizing the maximum density. For simplicity of exposition we threat
the densest subset as being unique.

\begin{lemma}
    \label{lemma:density:l:b}%
    Let $\mathcal{O} \subseteq \mathcal{D}$ be the densest subset, and let
    $\nabla = {\nabla}\mleft({\mathcal{O}}\mright)$. Then, for any object $u \in \mathcal{O}$, we
    have $d_{\mathcal{O}}\mleft({u}\mright) = \left| { u \sqcap (\mathcal{O}-u)} \right| \geq \nabla$,
    where
    $u \sqcap (\mathcal{O}-u) = \left\{  x \in \mathcal{O} - u \;\middle\vert\; x \cap u \neq
       \emptyset \right\}$.
\end{lemma}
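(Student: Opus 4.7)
The plan is to argue by contradiction: suppose some object $u \in \mathcal{O}$ has degree $d_{\mathcal{O}}(u) < \nabla$ in the induced subgraph $\mathsf{G}_{\cap \mathcal{O}}$, and show that removing $u$ from $\mathcal{O}$ produces a subset $\mathcal{O}' = \mathcal{O} - u$ with strictly larger density, contradicting the fact that $\mathcal{O}$ is the densest subset.

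Concretely, I would first observe that deleting $u$ destroys exactly the $d_{\mathcal{O}}(u)$ edges incident to $u$ in the induced subgraph, so $\mathsf{m}(\mathsf{G}_{\cap \mathcal{O}'}) = \mathsf{m}(\mathsf{G}_{\cap \mathcal{O}}) - d_{\mathcal{O}}(u)$ while $|\mathcal{O}'| = |\mathcal{O}|-1$. Writing $k = |\mathcal{O}|$ and $M = \mathsf{m}(\mathsf{G}_{\cap \mathcal{O}}) = k\nabla$, the new density is
\begin{equation*}
    {\nabla}\mleft({\mathcal{O}'}\mright)
    = \frac{M - d_{\mathcal{O}}(u)}{k - 1}.
\end{equation*}
The key algebraic step is to check that ${\nabla}(\mathcal{O}') > \nabla$ is equivalent to $d_{\mathcal{O}}(u) < M/k = \nabla$, by cross-multiplying and simplifying. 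Thus assuming $d_{\mathcal{O}}(u) < \nabla$ yields ${\nabla}(\mathcal{O}') > \nabla = \mathcalb{d}(\mathcal{D})$, contradicting the maximality of $\nabla$. Hence $d_{\mathcal{O}}(u) \geq \nabla$.

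One small caveat worth noting: if $|\mathcal{O}| = 1$, then $\nabla = 0$ and the statement is vacuous (every degree is $\geq 0$), so I should either handle this base case separately or observe that the contradiction argument applies only when $k \geq 2$, which is the only nontrivial case.

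I do not expect any real obstacle here; this is essentially the classical observation underlying Charikar's greedy peeling algorithm, and the only thing to be careful about is the direction of the inequality in the algebraic manipulation. The lemma is purely combinatorial and does not use any geometric property of disks, so no geometric argument is required.
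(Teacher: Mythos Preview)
Your proposal is correct and follows essentially the same approach as the paper: assume $d_{\mathcal{O}}(u) < \nabla$, observe that removing $u$ deletes exactly $d_{\mathcal{O}}(u)$ edges and one vertex, and derive that $\mathcal{O}-u$ has strictly larger density, contradicting optimality. The paper phrases the algebraic step as a mediant-type inequality rather than your explicit cross-multiplication, but the argument is the same.
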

\begin{proof}
    Observe that
    \begin{equation*}
        \nabla
        =
        \frac{\left| {\mathsf{E}_{\mathcal{O}}} \right|}{\left| {\mathcal{O}} \right|}
        =
        \frac{\left| {\mathsf{E}_{\mathcal{O}-u}} \right| + d_{\mathcal{O}}\mleft({u}\mright)}{\left| {\mathcal{O}} \right| -1 + 1}.
    \end{equation*}
    As such, if $d_{\mathcal{O}}\mleft({u}\mright) < \nabla$, then
    \begin{equation*}
        \frac{d_{\mathcal{O}}\mleft({u}\mright)}{1}
        <
        \nabla
        =
        \frac{d_{\mathcal{O}}\mleft({u}\mright) + \left| {\mathsf{E}_{\mathcal{O}-u}} \right|}{1 + \left| {\mathcal{O}} \right| -1 }
        <
        \frac{\left| {\mathsf{E}_{\mathcal{O}-u}} \right|}{\left| {\mathcal{O}} \right| -1 }.
    \end{equation*}
    But this implies that $\mathcal{O} - u$ is denser than $\mathcal{O}$, which is a
    contradiction.
\end{proof}

\subsection{Reporting all intersecting pairs of disks}

The algorithm of \hyperref[sec:expensive]{Section~\ref*{sec:expensive}} requires an efficient algorithm to
report all the intersecting pairs of disks.
\begin{lemma}
    \label{lemma:sweepline}%
    Given a set $\mathcal{D}$ of $n$ disks, all the intersecting pairs of
    disks of $\mathcal{D}$ can be computed in $O(n\log n + k)$ expected
    time, where $k$ is the number of intersecting pairs.
\end{lemma}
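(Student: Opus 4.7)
The plan is a randomized plane sweep, supplemented by a separate pass for nested pairs. View each disk as two $x$-monotone arcs (its upper and lower semicircle), giving $2n$ monotone curves in all. Sweep a vertical line left to right, maintaining in a balanced BST the arcs currently crossed by the sweep line, ordered by their $y$-coordinate along the sweep line. The events are the $2n$ arc endpoints plus the adjacent-arc crossing events, scheduled lazily. At an endpoint event, insert or delete the arc and test its at most two new neighbors for a future crossing; at a crossing event, swap the two arcs, output the witnessed disk pair, and retest the new adjacencies.

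Every pair of disks whose bounding circles cross is witnessed by this sweep, because the $y$-ordering of their two arcs along the sweep line must invert, forcing them to be adjacent at some moment. The number of arc crossings is $O(k_1)$, where $k_1 \le k$ is the number of boundary-crossing pairs. A textbook Bentley--Ottmann implementation costs $O((n+k_1)\log n)$; to reach the claimed $O(n\log n + k_1)$ expected bound, one substitutes a Clarkson--Shor-style randomized incremental construction of the arrangement of the $2n$ arcs (or a Balaban-style algorithm adapted from segments to $x$-monotone curves).

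The sweep misses pairs $(d,d')$ in which one disk is strictly contained in the other, since no boundary crossing occurs. Handle these in a second pass. The condition $d \subseteq d'$ is equivalent to $\|c_d - c_{d'}\| + r_d \le r_{d'}$, which after squaring becomes linear in the lifted coordinates $(x,y,r,x^2+y^2-r^2) \in \mathbb{R}^4$; processing disks in order of decreasing radius and maintaining the lifted points in a suitable output-sensitive range-reporting structure reports the $k_2 \le k$ containment pairs within the budget. Combining the two phases yields $O(n\log n + k_1 + k_2) = O(n\log n + k)$ expected.

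The main obstacle is obtaining the additive $O(k)$ rather than $O(k\log n)$ in the boundary phase: this forces a non-trivial sub-routine (RIC on the arcs or Balaban's algorithm) in place of the naive sweep. The secondary subtlety is arranging the containment pass to be output-sensitive so that its cost stays absorbed in $O(n\log n + k)$ even for deeply nested configurations; the lifting to $\mathbb{R}^4$ reduces this cleanly to a range-reporting task.
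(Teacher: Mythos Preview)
Your boundary-crossing phase is essentially the paper's: split each circle into two $x$-monotone arcs and build the arrangement (vertical decomposition) of these $2n$ curves by randomized incremental construction in $O(n\log n + k)$ expected time. That part is fine.

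The gap is in your containment phase. You commit to the lifting $(x,y,r,x^2+y^2-r^2)\in\mathbb{R}^4$ so that ``$d\subseteq d'$'' becomes a halfspace constraint, and then appeal to ``a suitable output-sensitive range-reporting structure''. But halfspace range reporting in $\mathbb{R}^4$ with near-linear space has worst-case query time $\Theta(n^{1/2}+\text{output})$; summed over $n$ queries this is $\Omega(n^{3/2})$, which blows the $O(n\log n+k)$ budget whenever $k$ is small. Processing in radius order does not help, since the constraint remains genuinely four-dimensional. So ``within the budget'' is asserted, not established.

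The paper avoids a second data structure entirely. Having already built the vertical decomposition, it traverses the planar dual graph (which has $O(n+k)$ vertices and edges), maintaining along the way the list of disks containing the current trapezoid; this list changes by at most one across each dual edge. For each disk $\mathsf{d}$ it designates a representative trapezoid incident to the rightmost point of $\mathsf{d}$, and when the traversal reaches that trapezoid it reads off the current list $L(\mathsf{d})$. Every disk in $L(\mathsf{d})$ intersects $\mathsf{d}$, so $\sum_{\mathsf{d}}|L(\mathsf{d})|=O(k)$, and the whole pass costs $O(n+k)$ on top of the arrangement construction.

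If you want to salvage a separate-pass approach, do not test $d\subseteq d'$ directly. Test the weaker condition $c_d\in d'$, which lifts to a halfspace in $\mathbb{R}^3$ and admits $O(\log n+\text{output})$ reporting with $O(n\log n)$ preprocessing. Every pair reported this way is an intersecting pair, so the total output across all $n$ queries is at most $k$; filter out the boundary-crossing pairs already found and what remains are exactly the containment pairs.
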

\begin{proof}
    We break the boundary of each disk at its two $x$-extreme points,
    resulting in a set of $2n$ $x$-monotone curves.  Computing the
    vertical decomposition of the arrangement of these disks (curves)
    $\mathcal{A}\mleft({\mathcal{D}}\mright)$ can be done in
    $O(n \log n + k)$ expected time \cite{h-gaa-11}. See
    \hyperref[fig:vert-dec]{Figure~\ref*{fig:vert-dec}} for an
    example.  This gives us readily all the pairs that their
    boundaries intersect.

    As for the intersections that rise out of containment, perform a
    traversal of the dual graph of the vertical decomposition (i.e.,
    each vertical trapezoid is a vertex, and two trapezoids are
    adjacent if they share a boundary edge). The dual graph is planar
    with $O(n+k)$ vertices and edges, and as such the graph can be
    traversed in $O(n+k)$ time.  During the traversal, by appropriate
    bookkeeping, it is straightforward to maintain the list of disks
    containing the current trapezoid, in $O(1)$ per edge traversed, as
    any edge traversed changes this set by at most one.

    For a disk $\mathsf{d}$, let $p_{\mathsf{d}}$ be the rightmost point of
    $\mathsf{d}$. For each disk $\mathsf{d}$, pick any trapezoid $\Delta$ such
    that $p_{\mathsf{d}} \in \Delta$ and either the top boundary of $\mathsf{d}$
    is the ceiling of $\Delta$ or the bottom boundary of $\mathsf{d}$ is
    the floor of $\Delta$.  Assign $\Delta_{\mathsf{d}} \leftarrow \Delta$
    as the {\em representative} trapezoid of $\mathsf{d}$.

    During the traversal of the dual graph, consider the case where we
    arrive at a representative trapezoid of a disk $\mathsf{d}$. Let
    $L(\mathsf{d})$ be the list of disks containing $\Delta_{\mathsf{d}}$. Then
    scan $L(\mathsf{d})$ to report all the containing pairs of
    $\mathsf{d}$. Each disk in $L(\mathsf{d})$ either intersects the boundary of
    $\mathsf{d}$, or contain it. Therefore, the total time spent at the
    representative trapezoids is
    $\sum_{\mathsf{d} \in \mathcal{D}}\left| {L(d)} \right| = O(k)$.
\end{proof}

\begin{figure}[h]
    \centering%
    \includegraphics{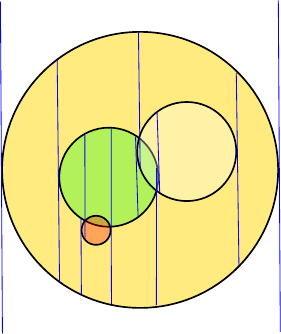}
    \caption{Vertical decomposition of four disks.}
    \label{fig:vert-dec}
\end{figure}

\section{From reporting to approximate sampling/counting}
\label{sec:app-sampling}

In this section, given a set of objects $\mathcal{D}$, and a reporting
data-structure for $\mathcal{D}$, we show a reduction to building a
data-structure, such that given a query object $\mathsf{q}$, it returns
approximately-at-uniform an object from
\begin{math}
    \mathsf{q} \sqcap \mathcal{D}%
    =%
    \left\{  x \in \mathcal{D} \;\middle\vert\; x \cap \mathsf{q} \neq \emptyset \right\},
\end{math}
and also returns an $(1\pm \varepsilon)$-approximation for the size of this
set.

\subsection{The data-structure}

\paragraph{The given reporting data-structure.}

Let $\mathcal{D}$ be a set of $n$ objects, and assume for any subset
$X \subseteq \mathcal{D}$ of size $m$, one can construct, in $C(m)$ time, a
data-structure that given a query object $\mathsf{d}$, returns, in
$O( Q(m) +k )$ time, all the objects in $X$ that intersects $\mathsf{d}$,
where $k = \left| {\mathsf{d} \sqcap X} \right|$.  Furthermore, we assume that if a
parameter $k'$ is specified by the query, then the data-structure
stops after $O( Q(m) +k')$ time, if $k >k'$, and indicate that this is
the case.

\begin{example}
    If $\mathcal{D}$ is a set of disks, and the query $\mathsf{d}$ is a disk,
    then this becomes a query reporting of the $k$-nearest neighbors
    in an additive weighted Voronoi diagram.  Liu \cite{l-nopkn-22}
    showed how to build such a reporting data-structure, in
    $O(n \log n)$ expected preprocessing time, and query time
    $O( \log n + k)$.
\end{example}

\paragraph{Data-structure construction.}

We build a random binary tree over the objects of $\mathcal{D}$, by
assigning each object of $\mathcal{D}$ with equal probability either a $0$
or a $1$ label. This partitions $\mathcal{D}$ into two sets $\mathcal{D}_0$
(label~$0$) and $\mathcal{D}_1$ (label~$1$).
Recursively build random trees $T_0$ and $T_1$ for $\mathcal{D}_0$ and
$\mathcal{D}_1$, respectively, with the output tree having $T_0$ and $T_1$
as the two children of the root. The constructions bottoms out when
the set of objects is of size one. Let $\mathcal{T}$ be the resulting tree
that has exactly $n$ leaves. For every node $u$ of $\mathcal{T}$, we
construct the reporting data-structure for the set of objects
$\mathcal{D}(u)$ -- that is, the set of objects stored in the subtree of
$u$.

Finally, create an array $L_i$ for each level $i$ of the tree $\mathcal{T}$,
containing pointers to all the nodes in this level of the tree.

\paragraph{Answering a query.}

Given a query object $\mathsf{q}$, and a parameter $\varepsilon \in (0,1/2)$, the
algorithm starts from an arbitrary leaf $v$ of $\mathcal{T}$. The leaf $v$
has a unique root to $v$ path, denoted by $\pi = u_0 u_1 \ldots u_t $,
where $u_0 = \mathrm{root}(\mathcal{T})$ and $u_t = v$. The algorithm
performs a binary search on $\pi$, using the reporting data-structure
associated with each node, to find the maximal $i$, such that
$\left| {\mathsf{q} \sqcap \mathcal{D}(u_i)} \right| > \psi = c\log n$, where $c$ is a
sufficiently large constant. Here, we use the property that one can
abort the reporting query if the number of reported objects exceeds
$\psi$. This implies that each query takes $Q(n) + O( \log n)$ time
(with no dependency on $\varepsilon$). Next, the algorithm computes the
maximal $j$ such that
\begin{math}
    \left| {\mathsf{q} \sqcap \mathcal{D}(u_j)} \right| > \psi_\varepsilon = c\varepsilon^{-2}\log n
\end{math}
(or set $j=0$ if such a $j$ does not exist).  This is done by going up
the path $\pi$ from $u_i$, trying $u_{i-1}, u_{i-2}, \ldots, u_j$
using the reporting data-structure till the condition is
fulfilled\footnote{One can also ``jump'' to level
   $i+\log_2(1/\varepsilon^2)-2$, and do a local search there for $j$, but
   this ``improvement'' does not effect the performance.}.  Next, the
algorithm chooses a vertex $u \in L_j$ uniformly at random. It
computes the set $S= \mathsf{q} \sqcap \mathcal{D}(u)$ using the reporting
data-structure. The algorithm then returns a random object from $S$
uniformly at random, and the number $2^j \left| {S} \right|$. The first is a
random element chosen from $\mathsf{q} \sqcap \mathcal{D}$, and the second
quantity returned is an estimate for $\left| {\mathsf{q} \sqcap \mathcal{D}} \right|$.

\subsection{Analysis}

\subsubsection{Correctness}
\begin{lemma}
    \label{lemma:a:high:is:good}%
    Let $\varepsilon \in (0,1/2)$, $\psi_\varepsilon = c\varepsilon^{-2}\log n$, and
    $\mathsf{q}$ be a query object.  Let $M \geq 1$ be the integer such
    that
    \begin{math}
        \psi_\varepsilon /16 \leq \left| { \mathcal{D} \sqcap \mathsf{q}} \right| /2^{M} <
        \psi_\varepsilon /8.
    \end{math}
    Then, for all nodes $v$ at distance $i \leq M$ from the root of
    $\mathcal{T}$, we have
    \begin{math}
        {\mathbb{P}}\mleft[ \Bigl.\smash{\left| {\mathcal{D}(v) \sqcap \mathsf{q}} \right| \notin
              (1\pm\varepsilon/2)\frac{\left| {\mathcal{D} \sqcap \mathsf{q}} \right|}{2^i}} \mright]
        \leq%
        \frac{1}{n^{\Omega(c)}}.
    \end{math}
\end{lemma}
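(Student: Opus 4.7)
The plan is to reduce this to a Chernoff bound on a binomial random variable, using the fact that the random labeling makes membership in $\mathcal{D}(v)$ an independent coin flip per object.

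First, I would unpack the construction. For a fixed node $v$ at depth $i$, the set $\mathcal{D}(v)$ is exactly the set of objects whose first $i$ random labels agree with the root-to-$v$ label sequence. Since each object's labels are chosen independently and uniformly, each object of $\mathcal{D}$ is placed in $\mathcal{D}(v)$ independently with probability exactly $2^{-i}$. Restricting to the $N := \left| {\mathcal{D} \sqcap \mathsf{q}} \right|$ objects that intersect $\mathsf{q}$, the random variable $X_v := \left| {\mathcal{D}(v) \sqcap \mathsf{q}} \right|$ is the sum of $N$ independent Bernoulli$(2^{-i})$ variables, with mean $\mu_i = N/2^i$.

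Next, I would use the hypothesis to lower bound $\mu_i$. By the definition of $M$ we have $N/2^M \geq \psi_\varepsilon/16$, and since $i \leq M$ this gives
\begin{equation*}
    \mu_i \;=\; \frac{N}{2^i} \;\geq\; \frac{N}{2^M} \;\geq\; \frac{\psi_\varepsilon}{16} \;=\; \frac{c\log n}{16\,\varepsilon^2}.
\end{equation*}
A standard multiplicative Chernoff inequality then yields
\begin{equation*}
    {\mathbb{P}}\mleft[ X_v \notin (1\pm \varepsilon/2)\mu_i \mright]
    \;\leq\; 2\exp\!\bigl(-\mu_i\,\varepsilon^2/12\bigr)
    \;\leq\; 2\exp\!\bigl(-c\log n/192\bigr)
    \;=\; n^{-\Omega(c)}.
\end{equation*}

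Finally, I would take a union bound over all relevant nodes. The tree $\mathcal{T}$ has $n$ leaves, hence at most $2n$ nodes in total, so the number of candidate nodes $v$ at depth $\leq M$ is at most $2n$. Absorbing this polynomial factor into the $n^{-\Omega(c)}$ tail (by choosing the constant $c$ large enough) finishes the proof.

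The only subtlety I anticipate is being careful that the events ``object $x$ lies in $\mathcal{D}(v)$'' are genuinely independent across the $N$ relevant objects $x$ for a \emph{fixed} $v$; this is immediate from the construction since each object's label vector is drawn independently. The rest is a routine Chernoff-plus-union-bound computation, with the bound $\mu_i \geq c\log n/(16\varepsilon^2)$ being precisely tuned so that a deviation of $\varepsilon/2$ still gives an exponent proportional to $c\log n$.
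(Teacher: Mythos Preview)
Your proof is correct and follows essentially the same approach as the paper: identify $\left| {\mathcal{D}(v) \sqcap \mathsf{q}} \right|$ as a sum of independent Bernoulli$(2^{-i})$ indicators with mean $\mu_i \geq \psi_\varepsilon/16$, apply a multiplicative Chernoff bound, and union bound over the $O(n)$ nodes of $\mathcal{T}$. Your write-up is in fact slightly more explicit than the paper's about why the indicators are independent, and your Chernoff constant ($\varepsilon^2\mu_i/12$, from plugging $\delta=\varepsilon/2$ into the standard $\exp(-\delta^2\mu/3)$ bound) is the correct one.
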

\begin{proof}Consider a node $v$ at a distance $i$ from the root, and
    let $Y_v = \left| {\mathcal{D}(v) \sqcap \mathsf{q}} \right|$.  Clearly,
    $\mu_v = {\mathbb{E}}\mleft[ Y_v \mright] = \left| { \mathcal{D} \sqcap \mathsf{q}} \right|/2^i$.  Since
    $i\leq M$, we have $\mu_v \geq \psi_\varepsilon/16$.  By
    \hyperref[theo:special:h:i:e]{Chernoff's inequality}, we have
    \begin{equation*}
        {\mathbb{P}}\mleft[ \bigl. Y_v \notin (1\pm \varepsilon/2)\mu_v \mright]
        \leq%
        2\exp\mleft({ - \varepsilon^2 \mu_v /3 }\mright)
        \leq%
        2\exp\mleft({ - \varepsilon^2 \psi_\varepsilon /48 }\mright)
        \leq
        2\exp\mleft({ - (c/48)\log n }\mright)
        \leq
        \frac{2}{n^{c/48}}.
    \end{equation*}
    The number of nodes in $\mathcal{T}$ is $O(n)$, and hence, by the union
    bound, for all nodes $v$ at distance $i\leq M$ from the root, we
    have
    ${\mathbb{P}}\mleft[ Y_v \notin (1\pm \varepsilon/2)\mu_v \mright] \leq {1}/{n^{\Omega(c)}}$.
\end{proof}

\begin{observation}
    \label{observation:low:is:small}%
    \hyperref[lemma:a:high:is:good]{Lemma~\ref*{lemma:a:high:is:good}}
    implies that for all nodes $v$ at distance $i > M$ from the root
    of $\mathcal{T}$, we have
    \begin{math}
        {\mathbb{P}}\mleft[ \bigl.\left| {\mathcal{D}(v) \sqcap \mathsf{q}} \right| > \psi_{\varepsilon} \mright]
        \leq {1}/{n^{\Omega(c)}}.
    \end{math}
    Indeed, \hyperref[lemma:a:high:is:good]{Lemma~\ref*{lemma:a:high:is:good}} implies this for all nodes at
    distance $M$ from the root, and the sizes of these sets are
    monotonically decreasing along any path down the tree.
\end{observation}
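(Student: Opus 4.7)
The plan is to combine the previous lemma applied at level exactly $M$ with the monotonicity of $|\mathcal{D}(v) \sqcap \mathsf{q}|$ along any root-to-leaf path. First I would instantiate \hyperref[lemma:a:high:is:good]{Lemma~\ref*{lemma:a:high:is:good}} at $i = M$: it guarantees that, with probability at least $1 - 1/n^{\Omega(c)}$, every node $v$ at distance exactly $M$ from the root satisfies $|\mathcal{D}(v) \sqcap \mathsf{q}| \leq (1+\varepsilon/2)\mu_v$, where $\mu_v = |\mathcal{D} \sqcap \mathsf{q}|/2^M < \psi_\varepsilon/8$ by the defining inequality of $M$. Consequently $|\mathcal{D}(v) \sqcap \mathsf{q}| < (1+\varepsilon/2)\psi_\varepsilon/8 < \psi_\varepsilon$ for all such level-$M$ nodes simultaneously.

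Next I would observe that any node $w$ at distance $i > M$ from the root lies in the subtree of a unique ancestor $v$ at distance exactly $M$, and by construction $\mathcal{D}(w) \subseteq \mathcal{D}(v)$, hence $|\mathcal{D}(w) \sqcap \mathsf{q}| \leq |\mathcal{D}(v) \sqcap \mathsf{q}| < \psi_\varepsilon$. Thus the bad event for $w$ is contained in the bad event for its level-$M$ ancestor, and the probability bound $1/n^{\Omega(c)}$ from the lemma carries over verbatim without any additional union bound.

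There is no genuine obstacle here; the only thing to verify is that the statement of \hyperref[lemma:a:high:is:good]{Lemma~\ref*{lemma:a:high:is:good}} already delivers its conclusion uniformly over all nodes at distance $\leq M$ (which it does, via the union bound baked into its proof), so no further failure probability is incurred by extending the guarantee downward through subtrees.
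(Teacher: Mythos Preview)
Your proposal is correct and matches the paper's own justification essentially verbatim: apply \hyperref[lemma:a:high:is:good]{Lemma~\ref*{lemma:a:high:is:good}} at level exactly $M$ to bound all those nodes below $\psi_\varepsilon$, then use the containment $\mathcal{D}(w) \subseteq \mathcal{D}(v)$ for any deeper descendant to propagate the bound downward without any additional union bound.
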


\begin{lemma}
    Assume that the number of distinct sets $\mathsf{q}' \sqcap \mathcal{D}$,
    over all possible query objects $\mathsf{q}'$, is bounded by a
    polynomial $O(n^d)$, where $d$ is some constant. Then, for a query
    $\mathsf{q}$, the probability that the algorithm returns a specific
    object $\mathsf{o} \in \mathcal{D} \sqcap \mathsf{q}$, is in $(1\pm \varepsilon)/ \beta$,
    where $\beta = \left| {\mathcal{D} \sqcap \mathsf{q}} \right|$. Similarly, the
    estimate the algorithm outputs for $\beta$ is in
    $(1\pm\varepsilon) \beta$. The answer is correct for all queries, with
    probability $\geq 1- 1/n^{\Omega(c)}$, for a sufficiently large
    constant $c$.
\end{lemma}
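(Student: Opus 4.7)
The plan is to condition on a good event $\mathcal{E}$ over the random construction of $\mathcal{T}$ that simultaneously gives concentration of intersection counts and balancedness of the tree at the levels the algorithm may use, and then verify the two guarantees by a direct computation of the sampling distribution.

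First I would enumerate the at most $O(n^d)$ distinct intersection sets $\mathsf{q}'\sqcap\mathcal{D}$ that can arise. For each such $\mathsf{q}$, write $\beta=|\mathcal{D}\sqcap\mathsf{q}|$ and let $M$ be the integer of \hyperref[lemma:a:high:is:good]{Lemma~\ref*{lemma:a:high:is:good}} (the degenerate case $\beta<\psi_\varepsilon/8$ forces $j=0$ deterministically, and the algorithm's answer is then exact). Let $\mathcal{E}$ be the conjunction of three events: (a) for every such $\mathsf{q}$ and every node $v$ at depth $i\leq M$, $|\mathcal{D}(v)\sqcap\mathsf{q}|\in(1\pm\varepsilon/2)\beta/2^i$; (b) for every such $\mathsf{q}$ and every node $v$ at depth $i>M$, $|\mathcal{D}(v)\sqcap\mathsf{q}|\leq\psi_\varepsilon$; and (c) for every $j$ with $2^j\leq 16n/\psi_\varepsilon$, every length-$j$ binary prefix is realized by at least one object, so that $|L_j|=2^j$. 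Parts (a) and (b) are exactly \hyperref[lemma:a:high:is:good]{Lemma~\ref*{lemma:a:high:is:good}} and \hyperref[observation:low:is:small]{Observation~\ref*{observation:low:is:small}}. Part (c) is a separate Chernoff bound on the random labelling, using that a fixed length-$j$ prefix has $n/2^j\geq\psi_\varepsilon/16=\Omega(\log n)$ objects in expectation. A union bound over the $O(n^d)$ patterns, the $O(n)$ tree nodes, and the $O(\log n)$ relevant levels then gives $\mathbb{P}[\mathcal{E}]\geq 1-1/n^{\Omega(c)}$ for $c$ sufficiently large.

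Next I would fix a query $\mathsf{q}$ and condition on $\mathcal{E}$. The level $j$ selected by the algorithm is the largest depth along $\pi$ with $|\mathsf{q}\sqcap\mathcal{D}(u_j)|>\psi_\varepsilon$. By (b), every node at depth greater than $M$ has count at most $\psi_\varepsilon$, so $j\leq M$; consequently $2^j\leq 16\beta/\psi_\varepsilon\leq 16n/\psi_\varepsilon$, and (c) forces $|L_j|=2^j$ and guarantees that every object of $\mathcal{D}\sqcap\mathsf{q}$ has a unique ancestor in $L_j$. For any $\mathsf{o}\in\mathcal{D}\sqcap\mathsf{q}$, let $u^\star\in L_j$ be that ancestor; since $u\in L_j$ is picked uniformly from a set of size $2^j$ and then an element of $\mathsf{q}\sqcap\mathcal{D}(u)$ is picked uniformly,
\begin{equation*}
    \mathbb{P}\bigl[\text{algorithm returns }\mathsf{o}\bigr]
    \;=\;\frac{1}{|L_j|}\cdot\frac{1}{|\mathsf{q}\sqcap\mathcal{D}(u^\star)|}
    \;=\;\frac{1}{2^j\,|\mathsf{q}\sqcap\mathcal{D}(u^\star)|}.
\end{equation*}
Part (a) of $\mathcal{E}$ then bounds $2^j\,|\mathsf{q}\sqcap\mathcal{D}(u^\star)|\in(1\pm\varepsilon/2)\beta$, and \hyperref[observation:eps:games]{Observation~\ref*{observation:eps:games}}(ii) places the reciprocal in $(1\pm\varepsilon)/\beta$, as required. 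The estimate $2^j|S|=2^j\,|\mathsf{q}\sqcap\mathcal{D}(u)|$ is handled identically by applying (a) to whichever $u$ was actually drawn, giving a value in $(1\pm\varepsilon/2)\beta\subseteq(1\pm\varepsilon)\beta$.

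The main obstacle is item (c): \hyperref[lemma:a:high:is:good]{Lemma~\ref*{lemma:a:high:is:good}} controls only intersection counts with the query, but the algorithm's $1/2^j$ normalization is the correct uniform-sampling probability only when the random bit-labels actually realize all $2^j$ buckets at depth $j$ (equivalently, when no branch has bottomed out to a singleton before depth $j$). Establishing this requires a separate Chernoff argument on the absolute sizes $|\mathcal{D}(v)|$, and the coincidence that makes the whole thing work is that the $M$ dictated by the query is always small enough (at most $\log_2(16n/\psi_\varepsilon)$) that every level the algorithm might land on is deep inside the balanced regime of the random tree.
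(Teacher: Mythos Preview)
Your argument follows the paper's almost exactly: invoke \hyperref[lemma:a:high:is:good]{Lemma~\ref*{lemma:a:high:is:good}} and \hyperref[observation:low:is:small]{Observation~\ref*{observation:low:is:small}} to force $j\le M$, then use the level-$j$ concentration to bound both the size estimate and the per-object return probability, and finish with a union bound over the $O(n^d)$ query patterns and $O(n)$ nodes. The paper's displayed chain of inequalities is precisely your computation $\tfrac{1}{|L_j|\cdot|\mathcal{D}(u^\star)\sqcap\mathsf{q}|}\in(1\pm\varepsilon)/\beta$.

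The one place you go beyond the paper is your item~(c), where you separately establish $|L_j|=2^j$ via a Chernoff bound on the absolute bucket sizes $|\mathcal{D}(v)|$. The paper silently writes $|L_j|$ in its inequality chain but bounds it using $2^j|\mathcal{D}(u)\sqcap\mathsf{q}|\in(1\pm\varepsilon/2)\beta$, i.e., it tacitly identifies $|L_j|$ with $2^j$. Your explicit treatment is more careful. It is worth noting, though, that (c) is not strictly an independent ingredient: if one reads \hyperref[lemma:a:high:is:good]{Lemma~\ref*{lemma:a:high:is:good}} as a statement about all $2^i$ \emph{potential} prefixes at depth $i\le M$ (its proof goes through unchanged), then each such prefix receives a strictly positive count from $\mathcal{D}\sqcap\mathsf{q}$ and hence exists as a node, giving $|L_j|=2^j$ for free. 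Either way, the substance of your proof matches the paper's.
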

\begin{proof}
    It is easy to verify the algorithm works correctly if
    $\left| {\mathsf{q} \sqcap \mathcal{D}(u_j)} \right| < \psi_\varepsilon$.  Otherwise, for
    the node $u_j$ computed by the algorithm, we have
    $\left| {\mathsf{q} \sqcap \mathcal{D}(u_j)} \right| > \psi_\varepsilon = c\varepsilon^{-2}\log
    n$. By \hyperref[observation:low:is:small]{Observation~\ref*{observation:low:is:small}}, with high probability, we have that
    $j\leq M$. By \hyperref[lemma:a:high:is:good]{Lemma~\ref*{lemma:a:high:is:good}}, it implies that for any
    node $u \in L_j$, we have
    $2^j\left| {\mathcal{D}(u) \sqcap \mathsf{q}} \right| \in (1\pm\varepsilon/2)\left| {\mathcal{D}
       \sqcap \mathsf{q}} \right| = (1\pm \varepsilon/2)\beta$, which implies that the
    estimate for the size of $\beta$ is correct, as $u \in L_j$. This
    readily implies that the probability of returning a specific
    object $\mathsf{o} \in \mathcal{D} \sqcap \mathsf{q}$ is in $(1\pm \varepsilon)/\beta$,
    since
    \begin{equation*}
        \frac{1-\varepsilon}{\beta}
        \leq
        \frac{1}{ (1+\varepsilon/2)  \left| {\mathcal{D}
              \sqcap \mathsf{q}} \right|}
        \leq
        \frac{1}{\left| {L_j} \right| \cdot \left| {\mathcal{D}(u)
              \sqcap \mathsf{q}} \right|}
        \leq
        \frac{1}{ (1-\varepsilon/2)  \left| {\mathcal{D}
              \sqcap \mathsf{q}} \right|}
        \leq
        \frac{1+\varepsilon}{\beta}.
    \end{equation*}

    As for the probabilities, there are $n$ nodes in $\mathcal{T}$, and
    $O(n^d)$ different queries, and thus the probability of failure is
    at most $n^{d+1} /n^{\Omega(c)} < 1 / n^{\Omega(c)}$, by
    \hyperref[lemma:a:high:is:good]{Lemma~\ref*{lemma:a:high:is:good}}.
\end{proof}

\subsubsection{Running times}

\paragraph{Query time.}
The depth of $\mathcal{T}$ is $h = O( \log n)$ with high probability
(follows readily from Chernoff's inequality). Thus, the first stage
(of computing the maximal $i$) requires $O(\log \log n)$ queries on
the reporting data-structure, where each query takes
$O( Q(n) + \log n)$ time. The second stage (of finding maximal $j$)
takes
\begin{equation*}
    \tau
    =%
    {\mathbb{E}}\mleft[ \Bigl.\smash{\sum\nolimits_{t=j}^i} O( Q(n) + \left| {\mathcal{D}( u_t)
          \sqcap \mathsf{q}} \right|) \mright].
\end{equation*}
Thus, we have
\begin{compactenumA}
    \smallskip%
    \item If $Q(n) = O(\log n)$, then $\tau = O( \psi_\varepsilon)$, as the
    cardinality of $\mathcal{D}( u_t) \sqcap \mathsf{q}$ decreases by a factor
    of two (in expectation) as one move downward along a path in the
    tree. Thus $\tau$ is a geometric summation in this case dominated
    by the largest term.

    \smallskip%
    \item If $Q(n)=\Omega(\log n)$, we have (in expectation) that
    $\left| { i - j } \right| \leq O( \log (1/\varepsilon))$, and thus
    $\tau = O( Q(n) \log (1/\varepsilon) + \psi_\varepsilon )$ time.

    \smallskip%
    \item If $Q(n)=O(n^{\lambda})$, for $0< \lambda \leq 1$, then the
    query time is dominated by the query time for the top node (i.e.,
    $u_j$) in this path, and $\tau = O(Q(n))$, as can easily be
    verified.
\end{compactenumA}

\paragraph{Construction time.}
The running time bounds of the form $O(C(n))$ are
\emphw{well-behaved}, if for any non-negative integers
$n_1, n_2, \ldots$, such that $\sum_{i=1}n_i=n$, implies that
$\sum_{i=1} C(n_i)=O(C(n))$.  Under this assumption on the
construction time, we have that the total construction time is
$O( C(n) \log n)$.

\subsubsection{Summary}

\begin{theorem}%
    {\label{theo:gen-red-I}}%
    Let $\mathcal{D}$ be a set of $n$ objects, and assume we are given a
    well-behaved range-reporting data-structure that can be
    constructed in $C(m)$ time, for $m$ objects, and answers a
    reporting query $\mathsf{q}$ in
    $O( Q(m) + \left| {\mathsf{q} \sqcap \mathcal{D}} \right| )$ time. Then, one can
    construct a data-structure, in $O( C(n ) \log n)$ time, such that
    given a query object $\mathsf{q}$, it reports an $(1\pm\varepsilon)$-estimate
    for $\beta = \left| {\mathsf{q} \sqcap \mathcal{D}} \right|$, and also returns an
    object from $\mathsf{q} \sqcap \mathcal{D}$, where each object is reported
    with probability $(1 \pm \varepsilon)/\beta$. The data-structure answers
    all such queries correctly with probability
    $\geq 1 - 1/n^{\Omega(1)}$.  The expected query time is:
    \begin{compactenumi}
        \smallskip%
        \item $O( (\varepsilon^{-2} + \log \log n) \log n)$ if
        $Q(m) = O( \log m)$.

        \smallskip%
        \item $O( Q(n))$ if $Q(m) =O(m^{\lambda})$, for some constant
        $\lambda > 0$.

        \smallskip%
        \item
        $O\bigl( \varepsilon^{-2} \log n + Q(n) \log \tfrac{\log n}{\varepsilon} )$
        otherwise.
    \end{compactenumi}
\end{theorem}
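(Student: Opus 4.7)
The plan is essentially assembly: the three subsections preceding the theorem statement have each established one of the theorem's claims, so the proof should just collect them and verify the case analysis for the query time.

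First I would dispatch the construction time. The tree $\mathcal{T}$ has depth $O(\log n)$ with high probability by Chernoff, and at each level the objects in $\mathcal{D}$ are partitioned among the nodes. The well-behaved hypothesis on $C(\cdot)$ means that for any level, summing $C(|\mathcal{D}(u)|)$ over the nodes $u$ at that level is $O(C(n))$. Multiplying by the number of levels gives $O(C(n)\log n)$, as claimed.

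Next I would invoke the correctness lemma already proved: the probability of returning a specific $\mathsf{o} \in \mathcal{D}\sqcap \mathsf{q}$ lies in $(1\pm\varepsilon)/\beta$ and the estimate $2^j|S|$ lies in $(1\pm\varepsilon)\beta$, both conditional on the good event from \hyperref[lemma:a:high:is:good]{Lemma~\ref*{lemma:a:high:is:good}} and \hyperref[observation:low:is:small]{Observation~\ref*{observation:low:is:small}}. The hypothesis of that lemma (polynomially many distinct sets $\mathsf{q}'\sqcap\mathcal{D}$) holds in the range-spaces of interest, so the union bound over the $O(n^{d+1})$ combinations gives overall correctness with probability at least $1-1/n^{\Omega(1)}$ by taking $c$ large enough.

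The main work is in verifying the three query-time cases, which is exactly the content of cases (A)--(C) in the query-time paragraph. The first stage (the binary search for the maximal $i$ with $|\mathsf{q}\sqcap \mathcal{D}(u_i)| > \psi = c\log n$) makes $O(\log\log n)$ truncated reporting calls, each costing $O(Q(n)+\log n)$. The second stage walks from $u_i$ up to $u_j$ and pays $\sum_{t=j}^{i}O(Q(n) + |\mathcal{D}(u_t)\sqcap \mathsf{q}|)$ in expectation. When $Q(m)=O(\log m)$, this second sum is geometric and dominated by $\psi_\varepsilon = O(\varepsilon^{-2}\log n)$, giving case (i). When $Q(m)=O(m^\lambda)$, the cost at $u_j$ alone is $O(Q(n))$ and swallows everything else, giving case (ii). Otherwise one bounds $|i-j|$ in expectation by $O(\log(1/\varepsilon))$ using the fact that the expected set size halves per level, and adds in the $O(\log\log n)$ binary-search calls from the first stage, producing the $Q(n)\log(\log n/\varepsilon)$ term in case (iii); the $\varepsilon^{-2}\log n$ summand is again $\psi_\varepsilon$. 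The only mild subtlety is combining the $\log\log n$ steps of the first stage with the $\log(1/\varepsilon)$ steps of the second into the single $\log(\log n/\varepsilon)$ factor, but this is immediate from $\log\log n + \log(1/\varepsilon) = \log(\log n/\varepsilon)$.
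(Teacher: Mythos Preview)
Your proposal is correct and mirrors the paper's approach exactly: the theorem in the paper carries no explicit proof, being a summary of the construction, correctness, and running-time subsections that precede it, and your write-up assembles those pieces in the same order with the same case analysis. The only point worth noting is that in case~(ii) you should make explicit that the first-stage binary search also costs $O(Q(n))$ total (not $O(Q(n)\log\log n)$), because the reporting structures along the root-to-leaf path have geometrically decreasing sizes and $Q(m)=O(m^{\lambda})$ makes the sum telescope; the paper's own running-time paragraph glosses over this in the same way.
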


Plugging in the data-structure of Liu \cite{l-nopkn-22} for disks,
with $C(m)=O(m\log m)$, and $Q(m)=O(\log m)$, in the above theorem,
implies the following.

\begin{corollary}\label{cor:disks-estimate}
    Let $\mathcal{D}$ be a set of $n$ disks in the plane.  One can construct
    in $O(n \log^2 n)$ time a data-structure, such that given a query
    disk $\mathsf{q}$ and a parameter $\varepsilon \in (0,1/2)$, it outputs an
    $(1\pm\varepsilon)$-estimate for $\beta = \left| {\mathsf{q} \sqcap \mathcal{D}} \right|$,
    and also returns a disk in $\mathsf{q} \sqcap \mathcal{D}$ with a
    probability that is $(1\pm \varepsilon)$-uniform.  The expected query
    time is $O( (\varepsilon^{-2} + \log \log n) \log n)$, and the result
    returned is correct with high probability for all possible
    queries.
\end{corollary}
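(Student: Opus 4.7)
The plan is to verify that Liu's disk-reporting structure satisfies every hypothesis of \hyperref[theo:gen-red-I]{Theorem~\ref*{theo:gen-red-I}} and then simply read off case~(i). Essentially no new analysis is needed beyond checking these hypotheses, so the proof is a short sanity check.

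I would begin by stating the relevant properties of Liu's structure \cite{l-nopkn-22}: given any $m$ disks one builds, in $C(m) = O(m\log m)$ expected time, a data-structure that given a query disk $\mathsf{q}$ reports $\mathsf{q} \sqcap \mathcal{D}$ in $O(\log m + k)$ time, where $k$ is the output size. The early-termination property required by the reduction (halt after $O(\log m + k')$ time once more than $k'$ objects would be emitted) follows by counting reports during the output phase. Well-behavedness of $C$ is immediate, since $\sum_i n_i \log n_i \leq n \log n$ whenever $\sum_i n_i = n$, so the theorem's preprocessing bound $O(C(n) \log n) = O(n\log^2 n)$ matches what is claimed.

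The only substantive point to check is the hypothesis (used in the sampling lemma feeding \hyperref[theo:gen-red-I]{Theorem~\ref*{theo:gen-red-I}}) that the number of distinct sets $\mathsf{q}' \sqcap \mathcal{D}$, as $\mathsf{q}'$ ranges over all possible query disks, is polynomial in $n$. This is the part I expect to be the main obstacle, though it is mild. A query disk is parametrized by three reals (center coordinates and radius), and for each input disk $d\in\mathcal{D}$ the relation $d \cap \mathsf{q}' \neq \emptyset$ is a single quadratic inequality in those three parameters. The arrangement of the $n$ corresponding tangency surfaces in $\mathbb{R}^3$ has combinatorial complexity $O(n^3)$, so $\mathsf{q}' \sqcap \mathcal{D}$ takes at most $O(n^3)$ distinct values as $\mathsf{q}'$ varies, yielding the required polynomial bound.

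With all three hypotheses verified, case~(i) of \hyperref[theo:gen-red-I]{Theorem~\ref*{theo:gen-red-I}} applies with $Q(m)=O(\log m)$ and delivers: preprocessing time $O(n\log^2 n)$, expected query time $O((\varepsilon^{-2} + \log\log n)\log n)$, an $(1\pm\varepsilon)$-estimate of $\beta$, and a sample from $\mathsf{q} \sqcap \mathcal{D}$ that is $(1\pm\varepsilon)$-uniform, with overall failure probability $\leq 1/n^{\Omega(1)}$ simultaneously over all queries. This is exactly the statement of the corollary.
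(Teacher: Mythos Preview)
Your proposal is correct and takes essentially the same approach as the paper, which simply says to plug Liu's data-structure (with $C(m)=O(m\log m)$ and $Q(m)=O(\log m)$) into \hyperref[theo:gen-red-I]{Theorem~\ref*{theo:gen-red-I}}. You are more thorough than the paper in explicitly verifying well-behavedness of $C$ and the polynomial bound on the number of distinct query answers, but the underlying argument is identical.
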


\section{A $(2+\varepsilon)$-approximation for densest subset disks}

In this section we design a $(2+\varepsilon)$-approximation algorithm to
compute the densest subset of disks in
$O(n\varepsilon^{-2}\log^3n + n\varepsilon^{-4}\log^2n)$ time.

\subsection{The algorithm}

The input is a set $\mathcal{D}$ of $n$ disks in the plane.  Let
$\vartheta = \varepsilon/15$.  The basic idea is to try a sequence of
exponentially decaying values to the optimal density $\mathcalb{d}$. To this
end, in the $i$th\xspace round, the algorithm would try the degree threshold
$\beta=n(1-\vartheta)^i$.

In the beginning of such a round, let $\mathcal{L} \leftarrow \mathcal{D}$.  During
a round, the algorithm repeatedly removes ``low-degree'' objects, by
repeatedly doing the following:
\begin{compactenumI}[itemsep=-0.5ex]
    \smallskip%
    \item The algorithm constructs the data-structure of
    \hyperref[cor:disks-estimate]{Corollary~\ref*{cor:disks-estimate}} on the objects of $\mathcal{L}$.

    \item Let $\mathcal{L}_< \subseteq \mathcal{L}$ be the objects whose degree in
    $\mathcal{L}$ is smaller than $(1+\vartheta)\beta$ according to this
    data-structure. Let $\mathcal{L}_\geq = \mathcal{L} \setminus \mathcal{L}_<$.

    \item If $\mathcal{L}_\geq$ is empty, then this round failed, and the
    algorithm continues to the next round.

    \item If $\left| {L_{<}} \right| < \vartheta \left| {\mathcal{L}} \right|$, then the algorithm
    returns $\mathcal{L}$ as the desired approximate densest subset.

    \item Otherwise, let $\mathcal{L} \leftarrow \mathcal{L} \setminus L_{<}$.  The
    algorithm continues to the next iteration (still inside the same
    round).
\end{compactenumI}

\subsection{Analysis}

\begin{lemma}
    \label{lemma:beta-at-least}%
    When the algorithm terminates, we have
    $\beta \geq (1-\vartheta)^3\mathcalb{d}$, with high probability, where
    $\mathcalb{d}$ is the optimal density.
\end{lemma}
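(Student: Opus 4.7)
The plan is to identify a specific round in which the algorithm is guaranteed to terminate successfully, and to show that the value of $\beta$ in that round is still at least $(1-\vartheta)^3 \mathcalb{d}$. Since the $\beta$-sequence is geometrically decreasing and the algorithm terminates in the first successful round, this will yield the claimed lower bound.

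First, I would condition on the high-probability event that every estimated degree the algorithm ever consults lies in $(1\pm\vartheta)\, d_\mathcal{L}(u)$. By \hyperref[cor:disks-estimate]{Corollary~\ref*{cor:disks-estimate}}, each individual data-structure is correct for all queries with probability $\geq 1 - 1/n^{\Omega(1)}$; the algorithm builds only polylogarithmically many such structures (there are $O(\log n)$ rounds, and within each round $|\mathcal{L}|$ shrinks by at least a $\vartheta$-fraction per iteration), so a union bound preserves the high-probability guarantee.

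Now let $i_0$ be the smallest index with $\beta_{i_0} \leq (1-\vartheta)^2 \mathcalb{d}$; then by the minimality, $\beta_{i_0} = (1-\vartheta)\beta_{i_0-1} > (1-\vartheta)^3 \mathcalb{d}$. I would prove by induction on iterations that $\mathcal{O} \subseteq \mathcal{L}$ throughout round $i_0$: the base case $\mathcal{L} = \mathcal{D}$ is trivial, and for the inductive step, \hyperref[lemma:density:l:b]{Lemma~\ref*{lemma:density:l:b}} gives $d_\mathcal{L}(u) \geq d_\mathcal{O}(u) \geq \mathcalb{d}$ for every $u \in \mathcal{O}$, so the estimated degree of $u$ is at least $(1-\vartheta)\mathcalb{d} \geq (1+\vartheta)\beta_{i_0}$, using the arithmetic fact $\tfrac{1-\vartheta}{1+\vartheta} \geq (1-\vartheta)^2$ (equivalent to $1 \geq 1 - \vartheta^2$). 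Hence no $u \in \mathcal{O}$ lands in $\mathcal{L}_<$, and the invariant persists.

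Given the invariant, $\mathcal{L}_\geq \supseteq \mathcal{O}$ is never empty in round $i_0$, so the round cannot fail; since $|\mathcal{L}|$ shrinks by at least a $\vartheta$-fraction whenever iterations continue, the round must exit via $|\mathcal{L}_<| < \vartheta|\mathcal{L}|$, at which point the algorithm returns $\mathcal{L}$. The final $\beta$ is therefore at least $\beta_{i_0} > (1-\vartheta)^3 \mathcalb{d}$ (and is larger if an earlier round already succeeded). The main subtlety to check carefully is the inequality chain that makes the exponent $3$ sharp: the slack $(1-\vartheta)^2$ built into the choice of $i_0$ is exactly what is needed to simultaneously absorb the $(1-\vartheta)$ estimation error on the degree and the $(1+\vartheta)$ factor in the algorithm's comparison threshold.
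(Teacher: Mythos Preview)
Your proposal is correct and follows essentially the same argument as the paper: identify the round where $\beta \in [(1-\vartheta)^3\mathcalb{d}, (1-\vartheta)^2\mathcalb{d}]$, use \hyperref[lemma:density:l:b]{Lemma~\ref*{lemma:density:l:b}} to show that no object of the optimal solution $\mathcal{O}$ ever lands in $\mathcal{L}_<$ in that round, and conclude that the round cannot fail and must terminate. Your write-up is in fact a bit more careful than the paper's --- you make the invariant $\mathcal{O}\subseteq\mathcal{L}$ explicit, you distinguish estimated from true degrees, and you spell out the inequality $(1-\vartheta)/(1+\vartheta)\geq(1-\vartheta)^2$ that makes the exponent $3$ work; the paper handles these points more tersely. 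One small slip: the number of rounds is $O(\vartheta^{-1}\log n)$, not $O(\log n)$, but this does not affect the union bound.
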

\begin{proof}
    Consider the iteration when
    $\beta \in [(1-\vartheta)^3\mathcalb{d}, (1-\vartheta)^2\mathcalb{d}]$. By definition of
    $\mathcal{L}_<$, all the objects in it have a degree at most
    $(1+\vartheta)\beta \leq (1+\vartheta)(1-\vartheta)^2\mathcalb{d}\leq
    (1-\vartheta)\mathcalb{d}$.  By \hyperref[lemma:density:l:b]{Lemma~\ref*{lemma:density:l:b}}, all the objects in the
    optimal solution have degree $\geq \mathcalb{d}$ (when restricted to the
    optimal solution).  Therefore, none of the objects in the optimal
    solution are in $\mathcal{L}_<$ and hence, the set $\mathcal{L}_\geq$ is not empty
    (and contains the optimal solution). Inside this round, the loop
    is performed at most $O( \vartheta^{-1} \log n)$ times, as every
    iteration of the loop shrinks $\mathcal{L}$ by a factor of $1-\vartheta$. This
    implies that the algorithm must stop in this round.
\end{proof}
\begin{lemma}
    \label{lemma:edges-remaining}%
    The above algorithm returns a $(2+\varepsilon)$-approximation of the
    densest subset.
\end{lemma}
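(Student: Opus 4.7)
The plan is to combine the lower bound on $\beta$ from \hyperref[lemma:beta-at-least]{Lemma~\ref*{lemma:beta-at-least}} with a direct lower bound on the average degree of the returned set $\mathcal{L}$. The termination condition guarantees that most of $\mathcal{L}$ has high (approximate) degree, and that translates, via the approximation guarantee of \hyperref[cor:disks-estimate]{Corollary~\ref*{cor:disks-estimate}}, to a lower bound on the true edge count of $\mathsf{G}_{\cap \mathcal{L}}$.

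First, I would unpack what membership in $\mathcal{L}_\geq$ means. The data-structure returns an $(1\pm \vartheta)$-estimate of each object's degree in the current $\mathcal{L}$. Hence, if an object is placed in $\mathcal{L}_\geq$, its estimated degree is at least $(1+\vartheta)\beta$, which implies (since the estimate is at most $(1+\vartheta)$ times the true degree) that its \emph{true} degree in $\mathcal{L}$ is at least $\beta$. The termination condition $|\mathcal{L}_<| < \vartheta |\mathcal{L}|$ then gives $|\mathcal{L}_\geq| \geq (1-\vartheta)|\mathcal{L}|$.

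Next I would sum degrees. Since every vertex in $\mathcal{L}_\geq$ contributes at least $\beta$ to the degree sum over $\mathcal{L}$, and degrees are non-negative, we have
\begin{equation*}
    2\,\mathsf{m}\mleft({\mathsf{G}_{\cap \mathcal{L}}}\mright)
    = \sum_{u \in \mathcal{L}} d_{\mathcal{L}}(u)
    \geq |\mathcal{L}_\geq|\cdot \beta
    \geq (1-\vartheta)|\mathcal{L}|\,\beta.
\end{equation*}
Dividing by $2|\mathcal{L}|$ yields ${\nabla}\mleft({\mathcal{L}}\mright) \geq (1-\vartheta)\beta/2$. Combining with \hyperref[lemma:beta-at-least]{Lemma~\ref*{lemma:beta-at-least}}, which gives $\beta \geq (1-\vartheta)^3 \mathcalb{d}$ with high probability, we get ${\nabla}\mleft({\mathcal{L}}\mright) \geq (1-\vartheta)^4 \mathcalb{d}/2$.

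Finally, I would do the $\varepsilon$-bookkeeping. With $\vartheta = \varepsilon/15$, repeated applications of \hyperref[observation:eps:games]{Observation~\ref*{observation:eps:games}} (parts (iii) and (iv), or a direct expansion) give $(1-\vartheta)^{-4} \leq 1 + O(\vartheta) \leq 1 + \varepsilon/2$, so $2/(1-\vartheta)^4 \leq 2+\varepsilon$, and therefore ${\nabla}\mleft({\mathcal{L}}\mright) \geq \mathcalb{d}/(2+\varepsilon)$, as claimed. The only subtle step is the first one — translating the approximate-degree threshold into a clean true-degree lower bound while avoiding a factor blowup — but \hyperref[cor:disks-estimate]{Corollary~\ref*{cor:disks-estimate}} hands us exactly what is needed, and then the calculation is routine.
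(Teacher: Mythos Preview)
Your proof is correct and follows essentially the same approach as the paper: bound the true degree of vertices in $\mathcal{L}_\geq$ from below using the approximation guarantee, use the termination condition $|\mathcal{L}_<| < \vartheta|\mathcal{L}|$ to lower-bound the edge count of $\mathcal{L}$, and then plug in $\beta \geq (1-\vartheta)^3\mathcalb{d}$ from \hyperref[lemma:beta-at-least]{Lemma~\ref*{lemma:beta-at-least}}. The only difference is that you extract the slightly sharper bound $d_{\mathcal{L}}(u)\geq \beta$ for $u\in\mathcal{L}_\geq$ (the paper uses $d_{\mathcal{L}}(u)\geq (1-\vartheta)\beta$), yielding $(1-\vartheta)^4$ in place of the paper's $(1-\vartheta)^5$, but this is cosmetic.
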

\begin{proof}
    Consider the set $\mathcal{L}$, the value of $\beta$ when the algorithm
    terminated, and let $\nu =\left| {\mathcal{L}} \right|$.  By
    \hyperref[lemma:beta-at-least]{Lemma~\ref*{lemma:beta-at-least}}, $\beta \geq (1-\vartheta)^3\mathcalb{d}$, and by the
    algorithm stopping condition, we have
    \begin{math}
        \left| {\mathcal{L}_<} \right| < \vartheta \left| {\mathcal{L}} \right|.
    \end{math}
    In addition, all the objects in $\mathcal{L}_\geq$ have degree at least
    $(1-\vartheta) \beta$. Thus, the number of induced edges on $\mathcal{L}$ is
    \begin{equation*}
        \mathsf{m}\mleft({\mathcal{L}}\mright)
        \geq
        \frac{(1-\vartheta) \beta \left| {\mathcal{L}_\geq} \right|}{2}
        >
        \frac{(1-\vartheta)^2 \beta \left| {\mathcal{L}} \right|}{2}
        \geq
        (1-\vartheta)^5\frac{\mathcalb{d}}{2} \left| {\mathcal{L}} \right|
        \geq
        (1-5\vartheta) \frac{\mathcalb{d}}{2} \left| {\mathcal{L}} \right|.
    \end{equation*}
    Thus
    ${\nabla}\mleft({\mathcal{L}}\mright) = \frac{\mathsf{m}\mleft({\mathcal{L}}\mright)}{\left| {\mathcal{L}} \right|} \geq (1-5 \vartheta)
    \mathcalb{d}/2 = (1-\varepsilon/3)\mathcalb{d}/2$, as $\vartheta= \varepsilon/15$.  Observe that
    $2/(1-\varepsilon/3) \leq 2(1+\varepsilon/2) \leq 2 + \varepsilon$.
\end{proof}

\begin{theorem}
    {\label{theo:first}}%
    Let $\mathcal{D}$ be a set of $n$ disks in the plane, and let
    $\varepsilon \in (0,1)$ be a parameter.  The above algorithm computes, in
    $O(n\varepsilon^{-2}\log^3n + n\varepsilon^{-4}\log^2n)$ expected time, a
    $(2+\varepsilon)$-approximation to the densest subgraph of
    $\mathsf{G}_{\cap \mathcal{D}}$.  The result returned is correct, as is the running
    time bound, with high probability.
\end{theorem}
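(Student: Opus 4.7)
The correctness argument is mostly in hand: Lemma~\ref{lemma:beta-at-least} shows the algorithm terminates with $\beta \ge (1-\vartheta)^3 \mathcalb{d}$, and Lemma~\ref{lemma:edges-remaining} converts this into the claimed $(2+\varepsilon)$-approximation ratio. Both lemmas implicitly assume that every degree estimate returned by the data-structure of Corollary~\ref{cor:disks-estimate} is $(1\pm\vartheta)$-accurate, so the only remaining correctness obligation is a union bound. The algorithm builds at most $O(\varepsilon^{-2}\log^2 n)$ data-structure instances (one per inner iteration, over all rounds) and issues at most $n$ queries per instance; choosing the constant $c$ inside $\psi_\varepsilon$ large enough makes all these $\mathrm{poly}(n)$ queries simultaneously $(1\pm\vartheta)$-accurate with high probability.

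For the running time, I plan a two-level amortization. At the outer level, $\beta = n(1-\vartheta)^i$ is geometrically decreasing, and by Lemma~\ref{lemma:beta-at-least} the algorithm terminates once $\beta \le (1-\vartheta)^3\mathcalb{d}$; assuming $\mathcalb{d} \ge 1/2$ (else $\mathcal{D}$ has no edges and the problem is trivial), this gives $O(\vartheta^{-1}\log n) = O(\varepsilon^{-1}\log n)$ rounds. At the inner level, the key observation is that every non-terminating iteration of a round satisfies $|\mathcal{L}_<| \ge \vartheta |\mathcal{L}|$, so $\mathcal{L}$ shrinks by a factor of at least $1-\vartheta$. Hence the successive sizes $m_0, m_1, \ldots$ inside one round form a geometric sequence with $m_0 = n$ and ratio $\le 1-\vartheta$, so $\sum_j m_j = O(n/\vartheta) = O(n/\varepsilon)$.

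An iteration with $|\mathcal{L}| = m$ spends $O(m\log^2 m) = O(m\log^2 n)$ time to (re)build the data-structure of Corollary~\ref{cor:disks-estimate}, and then makes $m$ degree queries at expected cost $O((\varepsilon^{-2} + \log\log n)\log n)$ each, for a per-iteration total of $O(m\log^2 n + m\varepsilon^{-2}\log n)$ in expectation. Summing across one round using $\sum_j m_j = O(n/\varepsilon)$ gives an expected per-round cost of $O(n\varepsilon^{-1}\log^2 n + n\varepsilon^{-3}\log n)$, and multiplying by the $O(\varepsilon^{-1}\log n)$ rounds yields exactly the claimed $O(n\varepsilon^{-2}\log^3 n + n\varepsilon^{-4}\log^2 n)$ bound.

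The only real obstacle I anticipate is this geometric-series accounting inside a round: a naive argument that charges each of the $O(\varepsilon^{-1}\log n)$ possible iterations at the worst-case size $n$ would introduce an extra $\log n$ factor and miss the stated bound. To promote the expected running-time bound to a high-probability one, I would invoke the high-probability $O(\log n)$ depth of the random tree underlying Corollary~\ref{cor:disks-estimate} together with a Chernoff-style concentration argument on the sum of polynomially many (essentially independent) query costs around their expectations.
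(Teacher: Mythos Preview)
Your proposal is correct and follows essentially the same route as the paper: the paper's proof also invokes Lemmas~\ref{lemma:beta-at-least} and~\ref{lemma:edges-remaining} for correctness, and for the running time performs exactly your two-level amortization---summing the geometrically decreasing sizes $n_i$ within a round to $O(n/\vartheta)$, costing each iteration $O(n_i\log^2 n + n_i\vartheta^{-2}\log n)$, and multiplying by the $O(\vartheta^{-1}\log n)$ rounds. Your additional remarks on the union bound and on promoting the expected query cost to a high-probability bound are more explicit than what the paper writes, but consistent with it.
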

\begin{proof}
    The expected time taken in Step-I is $O(n\log^2n)$ and the
    expected time taken in Step-II is
    $O(n(\vartheta^{-2}+\log\log n)\log n)$.  As such, the time taken
    perform the partition step on $n$ disks is
    $O(n\log^2n + n\vartheta^{-2}\log n)$ expected time.

    For a fixed value of $\beta$, let $t$ be the number of times the
    partition step (i.e., step~(V)) is performed, and let
    $n_1,n_2,\ldots,n_t$ be the number of objects participating in the
    partition step. Clearly, $n_1=n$, and in general
    $n_i \leq (1-\vartheta)n_{i-1}, \forall i\in [2,t]$.  Therefore,
    $\sum_{i=1}^{t}n_i=O(n/\vartheta)$ and hence, the expected time taken
    to perform the $t$ partition steps is
    $\sum_{i=1}^{t}O(n_i\log^2n_i + n_i\vartheta^{-2}\log n_i)=
    O(n\vartheta^{-1}\log^2n + n\vartheta^{-3}\log n)$.  The value of $\beta$
    can change $O(\vartheta^{-1}\log n)$ times during the algorithm and
    hence, the overall expected running time of the algorithm is
    $O(n\vartheta^{-2}\log^3n + n\vartheta^{-4}\log^2 n)$.
\end{proof}

\section{An $(1+\varepsilon)$-approximation for densest subset disks}
\label{sec:expensive}

Here, we present a $(1+\varepsilon)$-approximation algorithm for the densest
subset of disks, which is based on the following intuitive idea -- if
the intersection graph is sparse, then the problem is readily
solvable. If not, then one can sample a sparse subgraph, and use an
approximation algorithm on the sampled graph.

\subsection{Densest subgraph estimation via sampling}

Let $\mathsf{G}=(\mathsf{V},\mathsf{E})$ be a graph with $n$ vertices and $m$ edges, with
maximum subgraph density $\mathcalb{d}$. Let $\vartheta \in (0,1/6)$ be a
parameter, and assume that $m > c'n\vartheta^{-2}\log n$, where $c'$ is
some sufficiently large constant, which in particular implies that
\begin{equation*}
    \mathcalb{d}
    =
    \mathcalb{d}\mleft({\mathsf{G}}\mright)
    \geq
    \frac{m}{n}
    \geq
    \frac{c' }{\vartheta^{2}} \log n.
\end{equation*}
Assume we have an estimate $\overline{m} \in (1\pm \vartheta) m$ of $m$.  For a
constant $c$ to be specified shortly, with $c < c'$, let
\begin{equation*}
    \psi
    =
    c\frac{n}{\overline{m}}\vartheta^{-2} \log n
    \leq
    \frac{c }{c'(1-\vartheta) }
    \leq
    \frac{6c}{5c'}
    <%
    1.
\end{equation*}
Let $F = \{ e_1, \ldots, e_r\}$ be a random sample of
$r = \left\lceil {\psi \overline{m}} \right\rceil$ edges from $\mathsf{G}$. Specifically, in the
$i$th\xspace iteration, an edge $e_i$ is picked from the graph, where the
probability of picking any edge is in $(1\pm \vartheta)/m$.  Let
$H = (V, F)$, and observe that $H$ is a sparse graph with
$n$ vertices and $r = O( \vartheta^{-2} n \log n)$ edges. The claim is
that the densest subset $D \subseteq V$ in $H$, or even approximate
densest subset, is close to being the densest subset in $\mathsf{G}$. The
proof of this follows from previous work \cite{mtvv-dsdgs-15}, but
requires some modifications, since we only have an estimate to the
number of edges $m$, and we are also interested in approximating the
densest subgraph on the resulting graph. We include the details here
so that the presentation is self contained. The result we get is
summarized in \hyperref[lemma:sampling]{Lemma~\ref*{lemma:sampling}}, if the reader is uninterested in the
(somewhat tedious) analysis of this algorithm.

\subsubsection{Analysis}

\begin{lemma}
    \label{lemma:small:large:a}%
    Let $\mathcalb{d} = \mathcalb{d}\mleft({\mathsf{G}}\mright)$, and let $U \subseteq V$, be an arbitrary
    set of $k$ vertices.  If $\nabla_{\!\!\mathsf{G}}\mleft({U}\mright) \leq \mathcalb{d}/60$, then
    ${\mathbb{P}}\mleft[ \nabla_{\!\!H}\mleft({U}\mright) \geq \psi \mathcalb{d} /5  \mright] \leq n^{-100k}$.
\end{lemma}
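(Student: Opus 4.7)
The plan is to view the sampling as $r = \lceil \psi\overline{m}\rceil$ independent Bernoulli trials and apply the upper-tail Chernoff bound to the count of trials that land in $\mathsf{E}_U(\mathsf{G})$. Let $X_i$ be the indicator that the $i$th draw falls in $\mathsf{E}_U(\mathsf{G})$, and set $X = X_1 + \cdots + X_r$; then $X$ is binomial with per-trial success probability at most $(1+\vartheta)|\mathsf{E}_U(\mathsf{G})|/m$. Regardless of whether $F$ is treated as a set or a multiset, the number of edges of $H$ with both endpoints in $U$ is at most $X$, so $\nabla_{\!\!H}(U) \leq X/k$ and it suffices to bound ${\mathbb{P}}[X \geq \psi k\mathcalb{d}/5]$.

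For the expectation, I would use the hypothesis $\nabla_{\!\!\mathsf{G}}(U) \leq \mathcalb{d}/60$ --- which gives $|\mathsf{E}_U(\mathsf{G})| \leq k\mathcalb{d}/60$ --- together with $r \leq \psi\overline{m} + 1 \leq (1+\vartheta)\psi m + 1$ to obtain
\begin{equation*}
    \mu := {\mathbb{E}}[X] \leq (1+\vartheta)^2 \psi \frac{k\mathcalb{d}}{60} + (1+\vartheta)\frac{k\mathcalb{d}}{60\,m}.
\end{equation*}
The assumptions $m \geq c'n\vartheta^{-2}\log n$ and $\mathcalb{d} \leq n$ make the ceiling-induced second term smaller than the first by a $\Theta(\vartheta^2/\log n)$ factor, and using $\vartheta < 1/6$ via \hyperref[observation:eps:games]{Observation~\ref*{observation:eps:games}} the whole expression can be bounded by $\mu \leq \psi k\mathcalb{d}/40$.

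With this bound, the threshold $t := \psi k\mathcalb{d}/5$ satisfies $t/\mu \geq 8 > e$, so the large-deviation form of Chernoff's inequality gives ${\mathbb{P}}[X \geq t] \leq (e\mu/t)^t \leq (e/8)^t$. Lower-bounding the exponent via $\mathcalb{d} \geq m/n$ and $\psi\overline{m} = cn\vartheta^{-2}\log n$ yields $\psi\mathcalb{d} \geq c\vartheta^{-2}(\log n)/(1+\vartheta)$, hence $t = \Omega(ck\vartheta^{-2}\log n)$ and the tail bound is $n^{-\Omega(ck/\vartheta^2)}$, which is at most $n^{-100k}$ provided $c$ is chosen a sufficiently large absolute constant. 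The main obstacle I expect is purely arithmetic bookkeeping --- tracking the $(1\pm\vartheta)$ factors, absorbing the $+1$ from $r = \lceil \psi\overline{m}\rceil$, and verifying that the constants $60$ and $5$ appearing in the statement are calibrated so that the ratio $t/\mu$ stays bounded above $e$ with a fixed constant margin when $\vartheta < 1/6$.
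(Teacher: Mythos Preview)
Your proposal is correct and follows essentially the same route as the paper: define indicator variables for each of the $r$ independent draws landing in $\mathsf{E}_U(\mathsf{G})$, bound the mean by $O(\psi k\mathcalb{d}/60)$ via the density hypothesis, apply an upper-tail Chernoff bound at threshold $\psi k\mathcalb{d}/5$, and finish using $\psi\mathcalb{d} \geq \Omega(c\vartheta^{-2}\log n)$. You are in fact slightly more careful than the paper in two places---handling the ceiling in $r=\lceil\psi\overline{m}\rceil$ explicitly, and noting that $\nabla_{\!\!H}(U)\le X/k$ regardless of whether $F$ is a multiset---while the paper invokes its \hyperref[lemma:chernoff:delta:big]{Lemma~\ref*{lemma:chernoff:delta:big}} form of Chernoff rather than the $(e\mu/t)^t$ version you use; these are cosmetic differences only.
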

\begin{proof}
    We have $\mathcalb{d} \geq m/ n$, where $n = \left| {\mathsf{V}\mleft({\mathsf{G}}\mright)} \right|$ and
    $m = \left| {\mathsf{E}\mleft({ \mathsf{G}}\mright)} \right|$, and thus
    \begin{equation*}
        \psi
        =
        c \frac{n}{\overline{m} \vartheta^2 } \log n
        \geq
        c \frac{n}{(1+\vartheta)m \vartheta^2 } \log n
        \geq
        \frac{1}{\mathcalb{d}} \cdot  \frac{c}{(1+\vartheta) \vartheta^2 } \log n
    \end{equation*}
    Let $X_i=1$ if the edge sampled in the $i$th\xspace round belongs to
    $H_{U}$ and zero, otherwise.  Let $X=\sum_{i} X_i$ be the
    number of edges in $H_{U}$.  Then
    \begin{equation*}
        {\mathbb{P}}\mleft[ X_i=1 \mright]
        \in%
        (1\pm\vartheta)\frac{\left| {\mathsf{E}\mleft({\mathsf{G}_{U}}\mright)} \right|}{m}
        =%
        (1\pm\vartheta) \frac{k \nabla_{\!\!\mathsf{G}}\mleft({U}\mright)}{m}.
    \end{equation*}
    By linearity of expectations, and as $\overline{m} \in (1\pm \vartheta) m$,
    we have
    \begin{equation}
        {\mathbb{E}}\mleft[ X \mright]%
        \in%
        (1\pm\vartheta) \psi \overline{m} \frac{k \nabla_{\!\!\mathsf{G}}\mleft({U}\mright)}{m}%
        \subseteq%
        (1\pm\vartheta)^2 \psi k \nabla_{\!\!\mathsf{G}}\mleft({U}\mright).
        \label{equation:interval}%
    \end{equation}
    By assumption $\nabla_{\!\!\mathsf{G}}\mleft({U}\mright) \leq \mathcalb{d}/60$, implying that
    ${\mathbb{E}}\mleft[ X \mright] \leq (1+3\vartheta) \psi k \mathcalb{d}/60 \leq \psi k \mathcalb{d}/30$, if
    $\vartheta \in (0,1/3)$, by \hyperref[observation:eps:games]{Observation~\ref*{observation:eps:games}}.  Observe that
    \begin{math}
        (1+ (2e-1)){\mathbb{E}}\mleft[ X \mright] \leq \frac{\psi k \mathcalb{d}}{5}.
    \end{math}
    By Chernoff's inequality, \hyperref[lemma:chernoff:delta:big]{Lemma~\ref*{lemma:chernoff:delta:big}}, we have

    \begin{align*}
      {\mathbb{P}}\mleft[ \nabla_{\!\!H}\mleft({U}\mright) \geq \frac{\psi \mathcalb{d}}{5} \mright]
      &=
        {\mathbb{P}}\mleft[ X \geq \frac{\psi k \mathcalb{d}}{ 5} \mright]
        \leq%
        2^{-\psi k \mathcalb{d}/5}
        \leq
        \frac{1}{n^{100k}},
    \end{align*}
    by picking $c$ to be sufficiently large.
\end{proof}

\begin{lemma}
    \label{lemma:small:large:b}%
    Let $\mathcalb{d} = \mathcalb{d}\mleft({\mathsf{G}}\mright)$, and let $U \subseteq V$, be an arbitrary
    set of $k$ vertices.  If $\nabla_{\!\!\mathsf{G}}\mleft({U}\mright) \geq \mathcalb{d}/60$, then
    ${\mathbb{P}}\mleft[ \nabla_{\!\!H}\mleft({U}\mright) \in (1\pm \vartheta)^3 \psi \nabla_{\!\!\mathsf{G}}\mleft({U}\mright)  \mright]
    \geq 1-n^{-100k}$ .
\end{lemma}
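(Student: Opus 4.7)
The plan is to run the argument of the previous lemma in reverse direction, reusing the same random variables and the expectation bound from \hyperref[equation:interval]{Eq.~\ref*{equation:interval}}, and then invoking the standard two-sided multiplicative Chernoff bound rather than the one-sided large-deviation form.

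Concretely, I would again let $X_i = 1$ if the $i$th sampled edge lies in $G_U$ and $X = \sum_i X_i$, so that $\nabla_{\!\!H}\mleft({U}\mright) = X/k$. From the previous lemma's derivation we already know
\begin{equation*}
    {\mathbb{E}}\mleft[ X \mright]
    \in
    (1\pm\vartheta)^2 \psi k \nabla_{\!\!\mathsf{G}}\mleft({U}\mright),
\end{equation*}
so it suffices to establish that $X \in (1\pm\vartheta) {\mathbb{E}}\mleft[ X \mright]$ with probability at least $1 - n^{-100k}$; combining the two inclusions then gives $\nabla_{\!\!H}\mleft({U}\mright) \in (1\pm\vartheta)^3 \psi \nabla_{\!\!\mathsf{G}}\mleft({U}\mright)$.

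To apply multiplicative Chernoff, I need a sufficiently strong lower bound on ${\mathbb{E}}\mleft[ X \mright]$. Here is where the hypothesis $\nabla_{\!\!\mathsf{G}}\mleft({U}\mright) \geq \mathcalb{d}/60$ is essential: plugging in the bound $\psi \geq \frac{1}{\mathcalb{d}} \cdot \frac{c}{(1+\vartheta)\vartheta^2}\log n$ derived in the previous lemma, I obtain
\begin{equation*}
    {\mathbb{E}}\mleft[ X \mright]
    \geq
    (1-\vartheta)^2 \psi k \frac{\mathcalb{d}}{60}
    \geq
    \frac{c'' k \log n}{\vartheta^2},
\end{equation*}
for an absolute constant $c''$ that can be made as large as desired by taking $c$ large.

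With this in hand, the standard Chernoff bound yields
\begin{equation*}
    {\mathbb{P}}\mleft[ \bigl. X \notin (1\pm\vartheta){\mathbb{E}}\mleft[ X \mright] \mright]
    \leq
    2\exp\mleft({ -\vartheta^2 {\mathbb{E}}\mleft[ X \mright] /3 }\mright)
    \leq
    2\exp\mleft({ -(c''/3) k \log n }\mright)
    \leq
    n^{-100k},
\end{equation*}
after choosing $c$ large enough that $c''/3 \geq 100 \ln 2 \cdot \log_2 e$. I don't anticipate any real obstacle; the only subtlety is keeping track of how the three factors of $(1\pm\vartheta)$ arise -- one from the per-edge sampling probability $(1\pm\vartheta)/m$, one from the estimate $\overline{m} \in (1\pm\vartheta)m$ converting $\psi\overline{m}$ into the desired multiplicative scale $\psi m$, and one from the Chernoff concentration of $X$ around its mean -- and ensuring the constant $c$ is chosen large enough (before fixing $c'$) to beat the $n^{-100k}$ failure probability uniformly in $k$.
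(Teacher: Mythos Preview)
Your proposal is correct and matches the paper's proof essentially line for line: reuse the setup and \hyperref[equation:interval]{Eq.~(\ref*{equation:interval})} from the previous lemma, lower-bound ${\mathbb{E}}[X]$ by $\Omega(c\,k\vartheta^{-2}\log n)$ via the hypothesis $\nabla_{\!\!\mathsf{G}}(U)\geq \mathcalb{d}/60$ and the bound on $\psi$, then apply the two-sided Chernoff inequality (\hyperref[theo:special:h:i:e]{Theorem~\ref*{theo:special:h:i:e}}) and combine the three $(1\pm\vartheta)$ factors. The only cosmetic difference is that the paper writes the Chernoff exponent as $\vartheta^2{\mathbb{E}}[X]/4$ rather than $/3$, which is immaterial.
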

\begin{proof}
    Following the argument of \hyperref[lemma:small:large:a]{Lemma~\ref*{lemma:small:large:a}} and as
    $\nabla_{\!\!\mathsf{G}}\mleft({U}\mright) \geq \mathcalb{d}/60$, we have that
    ${\mathbb{E}}\mleft[ X \mright] = \Omega(k\cdot \psi \nabla_{\!\!\mathsf{G}}\mleft({U}\mright))= \Omega(k\cdot \psi
    \mathcalb{d}) = \Omega( k \cdot c \vartheta^{-2} \log n)$. Chernoff's
    inequality, \hyperref[theo:special:h:i:e]{Theorem~\ref*{theo:special:h:i:e}}, then implies that
    $X \in (1\pm \vartheta) {\mathbb{E}}\mleft[ X \mright]$ with probability at least
    $1-2\exp( - \vartheta^2 {\mathbb{E}}\mleft[ X \mright]/4) \geq 1 - 1/n^{100k}$, for $n$
    sufficiently large. The claim now readily follows from
    \hyperref[equation:interval]{Eq.~(\ref*{equation:interval}{)}}.
\end{proof}

\begin{lemma}
    \label{lemma:happy}%
    Let $\alpha\in(0,1/6)$ be a parameter. For all sets
    $U \subseteq V$, such that
    $\nabla_{\!\!H}\mleft({U}\mright) \geq (1-\alpha)\mathcalb{d}\mleft({H}\mright)$, we have that
    $\nabla_{\!\!\mathsf{G}}\mleft({U}\mright) \geq (1-6\vartheta)(1-\alpha)\mathcalb{d}$, and this holds
    with high probability.
\end{lemma}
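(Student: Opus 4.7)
The plan is to apply Lemmas \ref*{lemma:small:large:a} and \ref*{lemma:small:large:b} uniformly to all subsets $U \subseteq V$. Each gives its conclusion for a fixed $k$-vertex set with failure probability $n^{-100k}$, and there are at most $\binom{n}{k} \leq n^k$ such sets. Summing $n^k \cdot 2 n^{-100k}$ over $k=1,\ldots,n$ is at most $n^{-\Omega(1)}$, so a union bound guarantees that, with high probability, the conclusions of both lemmas hold simultaneously for \emph{every} $U \subseteq V$. I would condition on this event for the rest of the argument.

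Next, I would lower bound $\mathcalb{d}\mleft({H}\mright)$. Applied to a densest subset $D^\star$ of $\mathsf{G}$ (which satisfies $\nabla_{\!\!\mathsf{G}}\mleft({D^\star}\mright) = \mathcalb{d} \geq \mathcalb{d}/60$), \hyperref[lemma:small:large:b]{Lemma~\ref*{lemma:small:large:b}} gives $\mathcalb{d}\mleft({H}\mright) \geq \nabla_{\!\!H}\mleft({D^\star}\mright) \geq (1-\vartheta)^3 \psi \mathcalb{d}$.

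Now fix any $U$ with $\nabla_{\!\!H}\mleft({U}\mright) \geq (1-\alpha)\mathcalb{d}\mleft({H}\mright)$, so that $\nabla_{\!\!H}\mleft({U}\mright) \geq (1-\alpha)(1-\vartheta)^3 \psi \mathcalb{d}$. For $\alpha, \vartheta \in (0, 1/6)$, we have $(1-\alpha)(1-\vartheta)^3 \geq (5/6)^4 > 1/5$, so $\nabla_{\!\!H}\mleft({U}\mright) > \psi \mathcalb{d}/5$. This contradicts the conclusion of \hyperref[lemma:small:large:a]{Lemma~\ref*{lemma:small:large:a}}, so its hypothesis must fail: $\nabla_{\!\!\mathsf{G}}\mleft({U}\mright) > \mathcalb{d}/60$. \hyperref[lemma:small:large:b]{Lemma~\ref*{lemma:small:large:b}} now applies to $U$ itself and yields $\nabla_{\!\!H}\mleft({U}\mright) \leq (1+\vartheta)^3 \psi \nabla_{\!\!\mathsf{G}}\mleft({U}\mright)$.

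Combining the two bounds on $\nabla_{\!\!H}\mleft({U}\mright)$ and cancelling $\psi$ gives
\begin{equation*}
    \nabla_{\!\!\mathsf{G}}\mleft({U}\mright)
    \;\geq\;
    (1-\alpha) \left( \frac{1-\vartheta}{1+\vartheta} \right)^{3} \mathcalb{d}
    \;\geq\;
    (1-\alpha)(1-2\vartheta)^{3} \mathcalb{d}
    \;\geq\;
    (1-6\vartheta)(1-\alpha)\mathcalb{d},
\end{equation*}
using $(1-\vartheta)/(1+\vartheta) \geq 1-2\vartheta$ (a consequence of \hyperref[observation:eps:games]{Observation~\ref*{observation:eps:games}}) and a routine expansion of the cube. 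The only delicate step is the union bound over all $U$; the $n^{-100k}$ tails in Lemmas \ref*{lemma:small:large:a}--\ref*{lemma:small:large:b} are tuned precisely so that they absorb $\binom{n}{k}$, after which the proof reduces to a clean dichotomy on whether $\nabla_{\!\!\mathsf{G}}\mleft({U}\mright)$ exceeds $\mathcalb{d}/60$.
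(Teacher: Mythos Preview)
Your proof is correct and follows essentially the same route as the paper's: lower-bound $\mathcalb{d}(H)$ via \hyperref[lemma:small:large:b]{Lemma~\ref*{lemma:small:large:b}} applied to the densest set in $\mathsf{G}$, use \hyperref[lemma:small:large:a]{Lemma~\ref*{lemma:small:large:a}} to rule out $\nabla_{\!\!\mathsf{G}}(U)\le \mathcalb{d}/60$, then invoke \hyperref[lemma:small:large:b]{Lemma~\ref*{lemma:small:large:b}} on $U$ and simplify. The only cosmetic differences are that you do the union bound over all $U$ up front (the paper does it inline), and you simplify the final constant via $(1-2\vartheta)^3\ge 1-6\vartheta$ whereas the paper uses $(1-\vartheta)^6\ge 1-6\vartheta$; both are equivalent.
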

\begin{proof}
    Let $X$ be the densest subset in $\mathsf{G}$. By \hyperref[lemma:small:large:b]{Lemma~\ref*{lemma:small:large:b}},
    we have that
    \begin{equation*}
        \nabla_{\!\!H}\mleft({X}\mright) \in (1\pm \vartheta)^3 \psi \mathcalb{d}\mleft({\mathsf{G}}\mright)
        \implies
        \mathcalb{d}\mleft({H}\mright) \geq  (1 - \vartheta)^3 \psi \mathcalb{d}
        \geq
        \frac{\psi \mathcalb{d}}{2}.
    \end{equation*}

    By \hyperref[lemma:small:large:a]{Lemma~\ref*{lemma:small:large:a}}, we have that for all the sets
    $T \subseteq \mathsf{V}$, with $\nabla_{\!\!\mathsf{G}}\mleft({T}\mright) \leq \mathcalb{d}/60$, we have
    $\nabla_{\!\!H}\mleft({T}\mright) < \psi \mathcalb{d} /5 < \mathcalb{d}\mleft({H}\mright)/2$, and this
    happens with probability
    $\sum_{k=2}^n \sum_{T \subseteq \mathsf{V}: \left| {T} \right|=k} 1/n^{100k} \leq
    \sum_{k=2}^n \binom{n}{k}/n^{100k} \leq 1/n^{99}$.

    Thus, all the sets $U \subseteq \mathsf{V}$ under consideration have
    $\nabla_{\!\!\mathsf{G}}\mleft({U}\mright) > \mathcalb{d}/60$.  By \hyperref[lemma:small:large:b]{Lemma~\ref*{lemma:small:large:b}}, for all
    such sets, with probability $1-n^{-100k}\geq 1-1/n^{99}$, we have
    $\nabla_{\!\!H}\mleft({U}\mright) \in (1\pm \vartheta)^3 \psi \nabla_{\!\!\mathsf{G}}\mleft({U}\mright)$, which
    implies
    \begin{math}
        \nabla_{\!\!\mathsf{G}}\mleft({U}\mright) \in \frac{1}{(1\pm \vartheta)^3 \psi}
        \nabla_{\!\!H}\mleft({U}\mright).
    \end{math}
    Thus, we have
    \begin{equation*}
        \nabla_{\!\!\mathsf{G}}\mleft({U}\mright)
        \geq
        \frac{1}{(1+\vartheta)^3 \psi}
        \nabla_{\!\!H}\mleft({U}\mright)
        \geq
        \frac{(1-\alpha)\mathcalb{d}\mleft({H}\mright)}{(1+\vartheta)^3 \psi}
        \geq
        \frac{(1-\alpha)(1 - \vartheta)^3 \psi \mathcalb{d}}{(1+\vartheta)^3 \psi}
        \geq
        (1-\alpha)(1-6\vartheta) \mathcalb{d},
    \end{equation*}
    since $1/(1+\vartheta) \geq 1-\vartheta$, and $(1-\vartheta)^6 \geq 1-6\vartheta$.
\end{proof}

\subsubsection{Summary}

\begin{lemma}
    \label{lemma:sampling}%
    Let $\varepsilon \in (0,1)$ be a parameter, and let $\mathsf{G}=(\mathsf{V},\mathsf{E})$ be a
    graph with $n$ vertices and $m$ edges, with
    $ m = \Omega(\varepsilon^{-2} n \log n)$. Furthermore, let $\overline{m}$ be an
    estimate to $m$, such that $\overline{m} \in (1\pm \vartheta)m$, where
    $\vartheta = \varepsilon/10$.  Let $\psi = c(n/\overline{m})\vartheta^{-2} \log n$, and
    let $F$ be a random sample of
    $\psi \overline{m} = O( \varepsilon^{-2} n\log n)$ edges, with repetition, where
    the probability of any specific edge to be picked is
    $(1\pm \vartheta)/ m$, and $c$ is a sufficiently large constant. Let
    $H = (\mathsf{V}, F)$ be the resulting graph, and let
    $X \subseteq \mathsf{V}$ be subset of $H$ with
    $\nabla_{\!\!H}\mleft({X}\mright)\geq(1-\varepsilon/6)\mathcalb{d}\mleft({H}\mright)$. Then,
    ${\nabla}\mleft({X}\mright) \geq (1-\varepsilon)\mathcalb{d}\mleft({\mathsf{G}}\mright)$.
\end{lemma}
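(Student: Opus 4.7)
The plan is to observe that \hyperref[lemma:sampling]{Lemma~\ref*{lemma:sampling}} is essentially a packaging of \hyperref[lemma:happy]{Lemma~\ref*{lemma:happy}} with a specific choice of parameters, so the work reduces to verifying that the hypotheses of \hyperref[lemma:happy]{Lemma~\ref*{lemma:happy}} are met and that the resulting bound simplifies to $(1-\varepsilon)\mathcalb{d}\mleft({\mathsf{G}}\mright)$.

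First I would set $\alpha = \varepsilon/6$, which lies in $(0,1/6)$ since $\varepsilon \in (0,1)$, so that the premise of \hyperref[lemma:happy]{Lemma~\ref*{lemma:happy}} $\nabla_{\!\!H}\mleft({U}\mright)\geq (1-\alpha)\mathcalb{d}\mleft({H}\mright)$ aligns exactly with the assumption $\nabla_{\!\!H}\mleft({X}\mright)\geq (1-\varepsilon/6)\mathcalb{d}\mleft({H}\mright)$ in the statement. I would also note that the hypothesis $m = \Omega(\varepsilon^{-2}n\log n)$ combined with $\vartheta = \varepsilon/10$ gives $m > c'n\vartheta^{-2}\log n$, which is the density lower bound required throughout \hyperref[sec:expensive]{Section~\ref*{sec:expensive}} (and in particular by \hyperref[lemma:happy]{Lemma~\ref*{lemma:happy}}), and that the sampling model in the lemma matches verbatim the one used to define $H$ there.

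Next I would invoke \hyperref[lemma:happy]{Lemma~\ref*{lemma:happy}} with this choice of $\alpha$, obtaining, with high probability simultaneously over all candidate sets,
\begin{equation*}
    \nabla_{\!\!\mathsf{G}}\mleft({X}\mright)
    \geq
    (1-6\vartheta)(1-\varepsilon/6)\mathcalb{d}\mleft({\mathsf{G}}\mright).
\end{equation*}
Substituting $\vartheta=\varepsilon/10$ gives $(1-6\vartheta) = 1-3\varepsilon/5$, so the right-hand side is at least $(1-3\varepsilon/5)(1-\varepsilon/6)\mathcalb{d}\mleft({\mathsf{G}}\mright) \geq (1 - 3\varepsilon/5 - \varepsilon/6)\mathcalb{d}\mleft({\mathsf{G}}\mright) = (1-23\varepsilon/30)\mathcalb{d}\mleft({\mathsf{G}}\mright) \geq (1-\varepsilon)\mathcalb{d}\mleft({\mathsf{G}}\mright)$, as needed. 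Since ${\nabla}\mleft({X}\mright) = \nabla_{\!\!\mathsf{G}}\mleft({X}\mright)$, this yields the claimed bound.

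There is no real obstacle here; the statement is essentially a corollary. The only mild subtlety is bookkeeping the constants so that the slack introduced by going from $H$ back to $\mathsf{G}$ (the factor $(1-6\vartheta)$) and the slack from only having an approximately-densest subgraph of $H$ (the factor $(1-\varepsilon/6)$) together eat at most $\varepsilon$ of relative error; the calibration $\vartheta = \varepsilon/10$ is what makes this work comfortably.
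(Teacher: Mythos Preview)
Your proposal is correct and follows exactly the paper's approach: the paper's own proof is a single sentence that sets $\alpha=\varepsilon/6$ and invokes \hyperref[lemma:happy]{Lemma~\ref*{lemma:happy}}, and you have simply filled in the arithmetic verifying that $(1-6\vartheta)(1-\varepsilon/6)\geq 1-\varepsilon$ when $\vartheta=\varepsilon/10$.
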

\begin{proof}
    This follows readily from the above, by setting $\alpha =\varepsilon/6$,
    and using \hyperref[lemma:happy]{Lemma~\ref*{lemma:happy}}.
\end{proof}

\subsection{Random sampler}

To implement the above algorithm, we need an efficient algorithm for
sampling edges from the intersection graph of disks, which we describe
next.

\subsubsection{The algorithm}
The algorithm consists of the following steps:
\begin{compactenumI}[itemsep=-0.5ex]
    \item Build the data-structure of \hyperref[cor:disks-estimate]{Corollary~\ref*{cor:disks-estimate}} on the
    disks of $\mathcal{D}$ with error parameter $\varepsilon/c$, where $c$ is a
    sufficiently large constant. Also, build the range-reporting
    data-structure of Liu~\cite{l-nopkn-22} on the disks of $\mathcal{D}$.

    \item For each object $\mathsf{o} \in \mathcal{D}$, query the data-structure
    of \hyperref[cor:disks-estimate]{Corollary~\ref*{cor:disks-estimate}} with $\mathsf{o}$. Let the estimate returned
    be $d'$. If $d' < c'/\varepsilon$ (for a constant $c' \gg c$), then
    report $\mathsf{o} \sqcap \mathcal{D}$ by querying the range-reporting
    data-structure with $\mathsf{o}$, and set
    $d_{\mathsf{o}}\leftarrow |\mathsf{o} \sqcap \mathcal{D}|-1$.  Otherwise, set
    $d_{\mathsf{o}} \leftarrow d'$.

    \item We perform $|F|$ iterations and in each iteration,
    sample a random edge from $\mathsf{G}_{\cap \mathcal{D}}$. In a given iteration,
    sample a disk $\mathsf{o} \in \mathcal{D}$, where $\mathsf{o}$ has a probability of
    $\frac{d_{\mathsf{o}}}{\sum_{\mathsf{o} \in \mathcal{D}}d_{\mathsf{o}}}$ being sampled.
    If $d_{\mathsf{o}} < c'/\varepsilon$, then uniformly-at-random report a disk
    from $\mathsf{o} \sqcap (\mathcal{D}-\mathsf{o})$.  Otherwise, query the
    data-structure of \hyperref[cor:disks-estimate]{Corollary~\ref*{cor:disks-estimate}} with $\mathsf{o}$ which
    returns a disk in $\mathsf{o} \sqcap \mathcal{D}$ (keep querying till a disk
    other than $\mathsf{o}$ is returned).
\end{compactenumI}

\subsubsection{Analysis}

\begin{lemma}%
    \label{lemma:approx-degree}%
    For each object $\mathsf{o} \in \mathcal{D}$, we have
    $d_{\mathsf{o}}\in (1\pm \varepsilon/c'')|\mathsf{o} \sqcap (\mathcal{D}-\mathsf{o})|$, where
    $c \gg c''$ with high probability.
\end{lemma}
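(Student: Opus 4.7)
The plan is to do a case analysis based on which branch of Step~(II) was taken for the object $\mathsf{o}$, and to apply the high-probability correctness guarantee of \hyperref[cor:disks-estimate]{Corollary~\ref*{cor:disks-estimate}} simultaneously across the (at most) $n$ queries performed in that step. Throughout, let $N=|\mathsf{o}\sqcap\mathcal{D}|$ and note that, since $\mathsf{o}$ intersects itself, the target quantity is $|\mathsf{o}\sqcap(\mathcal{D}-\mathsf{o})|=N-1$.

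In the \emph{low-degree} branch ($d'<c'/\varepsilon$), the algorithm performs an exact range-reporting query with $\mathsf{o}$, so $d_{\mathsf{o}}=|\mathsf{o}\sqcap\mathcal{D}|-1=N-1$ deterministically and there is nothing to show. In the \emph{high-degree} branch ($d'\geq c'/\varepsilon$), the algorithm sets $d_{\mathsf{o}}=d'$, and \hyperref[cor:disks-estimate]{Corollary~\ref*{cor:disks-estimate}} (invoked with error parameter $\varepsilon/c$) guarantees $d'\in(1\pm\varepsilon/c)N$ with high probability. The branch threshold combined with the upper estimate forces $N\geq(c'/\varepsilon)/(1+\varepsilon/c)=\Omega(c'/\varepsilon)$, so $1/(N-1)=O(\varepsilon/c')$. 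Writing
\[
\frac{d_{\mathsf{o}}}{N-1}=\frac{d'}{N}\cdot\frac{N}{N-1}\in(1\pm\varepsilon/c)\bigl(1\pm O(\varepsilon/c')\bigr),
\]
a direct application of \hyperref[observation:eps:games]{Observation~\ref*{observation:eps:games}} absorbs these two factors into a single $(1\pm\varepsilon/c'')$ factor, provided $c$ and $c'$ are chosen sufficiently large relative to $c''$.

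The only real obstacle is constant bookkeeping: the threshold $c'/\varepsilon$ must be large enough that the ``$-1$'' offset between $|\mathsf{o}\sqcap\mathcal{D}|$ (what the estimate approximates) and the true degree $|\mathsf{o}\sqcap(\mathcal{D}-\mathsf{o})|$ is swallowed by the final error budget, which is a routine $(1\pm\varepsilon)$-calculus exercise via \hyperref[observation:eps:games]{Observation~\ref*{observation:eps:games}}. The high-probability conclusion for every $\mathsf{o}\in\mathcal{D}$ then follows directly from the ``correct for all queries'' guarantee already built into \hyperref[cor:disks-estimate]{Corollary~\ref*{cor:disks-estimate}}; no additional union bound is needed.
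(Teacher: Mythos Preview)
Your proof is correct and follows essentially the same approach as the paper: a case split on the two branches of Step~(II), with the low-degree branch trivial and the high-degree branch handled by combining the $(1\pm\varepsilon/c)$ guarantee on $N=|\mathsf{o}\sqcap\mathcal{D}|$ with the fact that the threshold forces $N=\Omega(c'/\varepsilon)$, so the ``$-1$'' offset is absorbed into the error budget. Your multiplicative decomposition $\frac{d'}{N}\cdot\frac{N}{N-1}$ via \hyperref[observation:eps:games]{Observation~\ref*{observation:eps:games}} is a slightly cleaner packaging of the same inequality manipulation the paper carries out explicitly.
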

\begin{proof}
    Fix an object $\mathsf{o}$. When $d' < c'/\varepsilon$, then the statement
    holds trivially.  Let ${d}=|\mathsf{o} \sqcap (\mathcal{D}-\mathsf{o})|$. Now we
    consider the case $d' \geq c'/\varepsilon$.  We know that
    $d' \in (1\pm \varepsilon/c)({d}+1)$.  Firstly,
    $d' \leq (1+\varepsilon/c)({d}+1)\leq (1+\varepsilon/c''){d}$ hold if
    \begin{equation*}
        {d} \geq \frac{1}{\varepsilon}\cdot\frac{1+ \varepsilon/c}{1/c''-1/c} \geq 1/2\varepsilon.
    \end{equation*}
    Observe that ${d} \geq 1/2\varepsilon$, since
    $c'/\varepsilon \leq d' \leq (1+\varepsilon/c)({d}+1)$ implies that
    ${d} \geq \frac{c'}{\varepsilon(1+\varepsilon/c)}-1 \geq 1/2\varepsilon$.  Finally,
    $d'\geq (1-\varepsilon/c)({d}+1) \geq (1-\varepsilon/c''){d}$ holds
    trivially. Therefore,
    \begin{equation*}
        d_{\mathsf{o}}=d' \in (1\pm \varepsilon/c''){d}.
    \end{equation*}
\end{proof}

\begin{lemma}
    In each iteration, the probability of sampling any edge in
    $\mathsf{G}_{\cap \mathcal{D}}$ is $(1\pm \varepsilon)/m$.
\end{lemma}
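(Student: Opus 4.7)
The plan is to decompose, for each edge $e = uv$ of $\mathsf{G}_{\cap \mathcal{D}}$, the sampling probability according to which endpoint is drawn first by the outer sampler:
\[
    \Pr[e \text{ sampled}] \;=\; \Pr[u \text{ first, then } v] \;+\; \Pr[v \text{ first, then } u].
\]
In the first summand the outer draw picks $u$ with probability $d_u/D$, where $D = \sum_{\mathsf{o}' \in \mathcal{D}} d_{\mathsf{o}'}$, and I would analyze $\Pr[\text{pick } v \mid u]$ according to which branch the algorithm takes. In the small-degree branch ($d_u < c'/\varepsilon$) the algorithm has the exact list $u \sqcap (\mathcal{D}-u)$ on hand and samples uniformly, so the conditional probability equals $1/|u\sqcap(\mathcal{D}-u)|$ exactly. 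In the large-degree branch, \hyperref[cor:disks-estimate]{Corollary~\ref*{cor:disks-estimate}} (built with error parameter $\varepsilon/c$) returns each element of $u \sqcap \mathcal{D}$ with probability $(1\pm\varepsilon/c)/|u\sqcap\mathcal{D}|$; the rejection loop ("keep querying until a disk $\neq u$ is returned") introduces only a $1 \pm O(\varepsilon/c')$ multiplicative factor since $|u \sqcap \mathcal{D}| = \Omega(c'/\varepsilon)$ by \hyperref[lemma:approx-degree]{Lemma~\ref*{lemma:approx-degree}}, so the conditional probability lies in $(1 \pm O(\varepsilon/c))/|u\sqcap(\mathcal{D}-u)|$.

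Combining either case with the degree estimate $d_u \in (1\pm\varepsilon/c'')|u\sqcap(\mathcal{D}-u)|$ from \hyperref[lemma:approx-degree]{Lemma~\ref*{lemma:approx-degree}} yields
\[
    \Pr[u \text{ first, then } v] \;=\; \frac{d_u}{D} \cdot \Pr[\text{pick } v \mid u] \;\in\; \frac{1 \pm O(\varepsilon)}{D},
\]
with a symmetric bound for the other ordering, so $\Pr[e \text{ sampled}] \in 2(1\pm O(\varepsilon))/D$. Finally, by the handshaking identity $\sum_{\mathsf{o}'}|\mathsf{o}'\sqcap(\mathcal{D}-\mathsf{o}')| = 2m$ and one more application of \hyperref[lemma:approx-degree]{Lemma~\ref*{lemma:approx-degree}} to each summand, $D \in (1\pm\varepsilon/c'')\cdot 2m$; substituting gives $\Pr[e \text{ sampled}] \in (1\pm O(\varepsilon))/m$.

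The main obstacle is bookkeeping the various error sources---$\varepsilon/c$ from the approximate sampler, $\varepsilon/c''$ from the stored degree, $O(\varepsilon/c')$ from the rejection loop, and $\varepsilon/c''$ slack on $D$---and absorbing them all into a single $1 \pm \varepsilon$ window. \hyperref[observation:eps:games]{Observation~\ref*{observation:eps:games}}(iv) is the clean tool for this: repeated application shows that a product of factors $1 \pm \varepsilon/\text{const}$ is contained in $1 \pm (\text{constant})\varepsilon$, with the cross-terms paid by a single additive bump in the inverse constant, so choosing the hierarchy $c'\gg c\gg c''\gg 1$ large enough to dominate those cross-terms suffices to collapse the composite error to at most $\varepsilon$, as required.
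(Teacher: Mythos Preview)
Your proposal is correct and follows essentially the same route as the paper: decompose by which endpoint is drawn first, use \hyperref[lemma:approx-degree]{Lemma~\ref*{lemma:approx-degree}} both to replace $d_u$ by $(1\pm\varepsilon/c'')\,|u\sqcap(\mathcal{D}-u)|$ and to replace $D$ by $(1\pm\varepsilon/c'')\,2m$ via handshaking, and then collapse the accumulated $(1\pm\varepsilon/\text{const})$ factors. If anything you are slightly more careful than the paper, which silently writes the conditional probability in the large-degree branch as $(1\pm\varepsilon/c)/d$ without separately accounting for the rejection loop as you do.
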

\begin{proof}
    An edge $(u,v)$ in $\mathsf{G}_{\cap \mathcal{D}}$ can get sampled in step~(III) in
    two ways. In the first way, the disk corresponding to $u$ gets
    sampled and then $v$ gets reported as the random neighbor of $u$,
    and vice-versa for the second way.

    Let $d=|\mathsf{o} \sqcap (\mathcal{D}-\mathsf{o})|$, where $\mathsf{o}$ is the disk
    corresponding to $u$. Consider the case, where
    $d_{\mathsf{o}} \geq c'/\varepsilon$.  As such, the first way of sampling the
    edge $(u,v)$ has probability lower-bounded by:
    \begin{equation*}
        \frac{d_{\mathsf{o}}}{\sum d_{\mathsf{o}}}\cdot \underbrace{\frac{1\pm\varepsilon/c}{d}}_{\text{almost-uniformity}} \subseteq
        \frac{d_{\mathsf{o}}}{\underbrace{(1 \pm \varepsilon/c'')2m}_{\hyperref[lemma:approx-degree]{Lemma~\ref*{lemma:approx-degree}}}}\cdot
        \frac{1 \pm \varepsilon/c}{d} \subseteq
        \frac{(1\pm \varepsilon/c'')d}{(1\pm \varepsilon/c'')2m}\cdot
        \frac{1\pm \varepsilon/c}{d} \subseteq
        \frac{1\pm \varepsilon}{2m}.
    \end{equation*}

    Therefore, for the case $d_{\mathsf{o}} \geq c'/\varepsilon$, the probability
    of sampling the edge $(u,v)$ is $(1\pm \varepsilon)/m$. Similarly, the
    statement holds for the case $d_{\mathsf{o}} < c'/\varepsilon$.
\end{proof}

\begin{lemma}
    The expected running time of the algorithm is
    $O(\varepsilon^{-4}n\log^2n + \varepsilon^{-2}n\log^2n\cdot\log\log n)$.
\end{lemma}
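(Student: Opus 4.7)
The plan is to bound the expected time of each of the three steps of the algorithm separately and observe that Step~(III) dominates. Step~(I) builds the data-structure of \hyperref[cor:disks-estimate]{Corollary~\ref*{cor:disks-estimate}} in $O(n\log^2 n)$ expected time and Liu's reporting structure in $O(n\log n)$ expected time. Step~(II) issues $n$ queries to the data-structure of \hyperref[cor:disks-estimate]{Corollary~\ref*{cor:disks-estimate}}, each taking $O((\varepsilon^{-2}+\log\log n)\log n)$ expected time; in addition, for every object $\mathsf{o}$ whose estimate satisfies $d' < c'/\varepsilon$ we invoke the reporting structure once, and since by \hyperref[lemma:approx-degree]{Lemma~\ref*{lemma:approx-degree}} the true degree of such $\mathsf{o}$ is then $O(1/\varepsilon)$, the reporting query costs $O(\log n + 1/\varepsilon)$. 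Hence Step~(II) totals $O(n(\varepsilon^{-2}+\log\log n)\log n)$, which is already dominated by the Step~(III) bound below.

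For Step~(III), I would first precompute a sampling table for the distribution proportional to $\{d_{\mathsf{o}}\}_{\mathsf{o}\in\mathcal{D}}$ (for instance via Vose's alias method) in $O(n)$ time, so that each draw of a root disk $\mathsf{o}$ takes $O(1)$. Conditioned on $\mathsf{o}$, if $d_{\mathsf{o}} < c'/\varepsilon$ the neighbor list is already stored from Step~(II), so a uniform neighbor is returned in $O(1)$ time; otherwise a single call to the data-structure of \hyperref[cor:disks-estimate]{Corollary~\ref*{cor:disks-estimate}} returns an approximately uniform element of $\mathsf{o}\sqcap\mathcal{D}$ in $O((\varepsilon^{-2}+\log\log n)\log n)$ expected time. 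Retries are needed only when the returned disk equals $\mathsf{o}$ itself; in the branch $d_{\mathsf{o}}\geq c'/\varepsilon$ this happens with probability at most $(1+O(\varepsilon))/(d_{\mathsf{o}}+1) \leq 1/2$, so the expected number of retries per iteration is $O(1)$.

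Summing the $|F| = O(\varepsilon^{-2} n \log n)$ iterations, Step~(III) takes
\begin{equation*}
    O\bigl(\varepsilon^{-2} n \log n \cdot (\varepsilon^{-2}+\log\log n)\log n\bigr)
    =
    O(\varepsilon^{-4} n \log^2 n + \varepsilon^{-2} n \log^2 n \cdot \log\log n)
\end{equation*}
expected time, matching the claim once the (subsumed) costs of Steps~(I) and~(II) are added. The only mildly delicate point in this plan is the $O(1)$ bound on the expected number of retries in Step~(III); once that is justified, the rest is routine bookkeeping.
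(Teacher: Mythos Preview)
Your proposal is correct and follows essentially the same decomposition as the paper: bound Steps~(I), (II), and (III) separately, with Step~(III) dominating via $|F|=O(\varepsilon^{-2}n\log n)$ iterations each costing $O((\varepsilon^{-2}+\log\log n)\log n)$ in expectation. You are in fact more careful than the paper, which simply asserts that each iteration of Step~(III) requires $O(1)$ queries in expectation, whereas you justify the retry bound and make explicit the $O(n)$ preprocessing for sampling a disk proportionally to $d_{\mathsf{o}}$.
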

\begin{proof}
    The first step of the algorithm takes $O(n\log^2n)$ expected time.
    The second step of the algorithm takes
    $O(n(\varepsilon^{-2}+\log\log n)\log n)$ expected time. In the third
    step of the algorithm, each iteration requires querying the
    data-structure of \hyperref[cor:disks-estimate]{Corollary~\ref*{cor:disks-estimate}} $O(1)$ times in
    expectation.  Therefore, the third step takes
    $O(|F|)\cdot O((\varepsilon^{-2}+\log\log n)\log n)= O(\varepsilon^{-4}n\log^2n
    + \varepsilon^{-2}n\log^2n\cdot\log\log n)$ expected time.~
\end{proof}

The above implies the following result.

\begin{lemma}
    \label{lemma:random-sampler}%
    Let $\mathcal{D}$ be a collection of $n$ disks in the plane, and let
    $\mathsf{G}_{\cap \mathcal{D}}$ be the corresponding geometric intersection graph
    with $m$ edges. Let $F$ be a random sample of
    $O( \varepsilon^{-2} n\log n)$ edges from $\mathsf{G}_{\cap \mathcal{D}}$, with
    repetition, where the probability of any specific edge to be
    picked is $(1\pm \varepsilon)/m$.  The edges are all chosen independently
    into $F$.  Then, the algorithm described above, for
    computing $F$, runs in
    $O(\varepsilon^{-4}n\log^2n + \varepsilon^{-2}n\log^2n\cdot\log\log n)$ expected
    time.
\end{lemma}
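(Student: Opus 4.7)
The plan is to obtain this statement as a direct corollary of the three preceding lemmas in the subsection, since the sampler itself has already been described and each of the required pieces—sample size, per-edge probability, independence, running time—has essentially been proved already. Thus my ``proof'' is really a short verification that nothing has been dropped in the restatement.

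First, to check $|F| = O(\varepsilon^{-2} n \log n)$, I recall from \hyperref[lemma:sampling]{Lemma~\ref*{lemma:sampling}} that the sampler draws $r = \lceil \psi \overline{m}\rceil$ edges with $\psi = c (n/\overline{m}) \vartheta^{-2} \log n$ and $\vartheta = \Theta(\varepsilon)$, so $r = cn \vartheta^{-2} \log n = O(\varepsilon^{-2} n \log n)$. Next, the per-edge probability bound $(1 \pm \varepsilon)/m$ is exactly the conclusion of the preceding lemma about a single iteration (the one immediately after \hyperref[lemma:approx-degree]{Lemma~\ref*{lemma:approx-degree}}), so no new computation is needed. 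Independence of the samples is immediate from the description of the algorithm: the preprocessing in Steps~(I)–(II) is performed once; then each of the $|F|$ iterations of Step~(III) runs a fresh weighted disk draw followed by an independent neighbor sample (either via the reporting structure when $d_{\mathsf{o}} < c'/\varepsilon$, or via an independent query to the data-structure of \hyperref[cor:disks-estimate]{Corollary~\ref*{cor:disks-estimate}}), and these draws share no random coins across iterations.

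The running time bound $O(\varepsilon^{-4} n \log^2 n + \varepsilon^{-2} n \log^2 n \cdot \log \log n)$ is exactly what was established in the preceding running-time lemma, decomposed into preprocessing cost $O(n \log^2 n)$ for Step~(I), degree-estimation cost $O(n(\varepsilon^{-2} + \log \log n)\log n)$ for Step~(II), and $|F| \cdot O((\varepsilon^{-2} + \log\log n)\log n)$ for Step~(III); adding these gives the stated bound.

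The only subtlety I would flag is that \hyperref[lemma:approx-degree]{Lemma~\ref*{lemma:approx-degree}} holds only with high probability, so the conclusion ``each edge is sampled with probability $(1\pm\varepsilon)/m$'' should be read as holding conditional on the high-probability event that every $d_{\mathsf{o}}$ lies in $(1 \pm \varepsilon/c'')|\mathsf{o} \sqcap (\mathcal{D}-\mathsf{o})|$. The failure event has probability $n^{-\Omega(1)}$ and is absorbed into the overall failure probability; it does not affect the expected running time since even on the bad event the per-iteration work is still bounded by $O((\varepsilon^{-2} + \log\log n)\log n)$. This is the only place where care is needed; everything else is pure bookkeeping.
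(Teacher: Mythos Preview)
Your proposal is correct and matches the paper's approach exactly: in the paper this lemma is simply introduced with ``The above implies the following result'' and given no separate proof, so it is indeed just a restatement bundling the three preceding lemmas (approximate degrees, per-edge sampling probability, and running time). Your write-up makes explicit precisely the bookkeeping the paper leaves implicit, including the high-probability caveat, and there is nothing to add.
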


\subsection{The result}

\begin{theorem}
    {\label{theo:second}}%
    Let $\mathcal{D}$ be a collection of $n$ disks in the plane.  A
    $(1+\varepsilon)$-approximation to the densest subgraph of $\mathcal{D}$ can be
    computed in
    $O(\varepsilon^{-4}n\log^2n + \varepsilon^{-2}n\log^2n\cdot\log\log n)$ expected
    time.  The correctness of the algorithm holds with high
    probability.
\end{theorem}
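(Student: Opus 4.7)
The plan is to assemble three already-proven ingredients: the estimator of Corollary~\ref{cor:disks-estimate} to approximate $m = |\mathsf{E}\mleft({\mathsf{G}_{\cap \mathcal{D}}}\mright)|$, the random edge sampler of Lemma~\ref{lemma:random-sampler} to produce a sparse graph $H$ on $\mathsf{V}$ whose densest subgraph approximates that of $\mathsf{G}_{\cap \mathcal{D}}$ (as guaranteed by Lemma~\ref{lemma:sampling}), and any near-linear-time $(1+\varepsilon)$-approximation for densest subgraph on an explicit sparse graph (e.g.\ Bahmani~\textit{et~al.}\xspace~\cite{bgm-epgam-14}, running in $O((m/\varepsilon^2)\log n)$ time). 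Throughout, the internal error $\vartheta$ will be set to a constant fraction of $\varepsilon$ (say $\vartheta = \varepsilon/C$ for a sufficiently large constant $C$) so that after composing losses through Observation~\ref{observation:eps:games}, the final approximation is $(1+\varepsilon)$.

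First I would estimate $m$: build Corollary~\ref{cor:disks-estimate} on $\mathcal{D}$ with parameter $\vartheta$ and query it with each disk $\mathsf{o}$ to get $d_\mathsf{o} \in (1\pm\vartheta)|\mathsf{o} \sqcap (\mathcal{D}-\mathsf{o})|$; set $\overline{m} = \tfrac{1}{2}\sum_\mathsf{o} d_\mathsf{o} \in (1\pm\vartheta)m$, in $O(n\log^2 n + n(\vartheta^{-2}+\log\log n)\log n)$ expected time. Then branch on the magnitude of $\overline{m}$. If $\overline{m} \leq c \varepsilon^{-2} n\log n$ for the constant $c$ of Lemma~\ref{lemma:sampling}, then $m = O(\varepsilon^{-2}n\log n)$ and I can simply enumerate all intersecting pairs of disks via Lemma~\ref{lemma:sweepline} in $O(n\log n + m)$ expected time and run Bahmani~\textit{et~al.}\xspace directly on $\mathsf{G}_{\cap \mathcal{D}}$, getting an $(1+\varepsilon)$-approximation in $O((m/\varepsilon^2)\log n) = O(\varepsilon^{-4}n\log^2 n)$ time. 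Otherwise, $m$ satisfies the hypothesis of Lemma~\ref{lemma:sampling}, so I invoke Lemma~\ref{lemma:random-sampler} to obtain a sample $F$ of $O(\varepsilon^{-2}n\log n)$ edges, each drawn with probability $(1\pm\vartheta)/m$, form $H = (\mathsf{V}, F)$, and run Bahmani~\textit{et~al.}\xspace on $H$ with parameter $\varepsilon/6$ to produce $X\subseteq\mathsf{V}$ with $\nabla_{\!\!H}\mleft({X}\mright) \geq (1-\varepsilon/6)\mathcalb{d}\mleft({H}\mright)$. Lemma~\ref{lemma:sampling} then certifies $\nabla\mleft({X}\mright) \geq (1-\varepsilon)\mathcalb{d}\mleft({\mathsf{G}_{\cap \mathcal{D}}}\mright)$; after rescaling $\varepsilon$ by a constant, this is the required $(1+\varepsilon)$-approximation.

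The running time is dominated by the sampler, $O(\varepsilon^{-4}n\log^2 n + \varepsilon^{-2}n\log^2 n \cdot \log\log n)$ by Lemma~\ref{lemma:random-sampler}, plus the Bahmani~\textit{et~al.}\xspace call on the sparse graph $H$ with $|F| = O(\varepsilon^{-2}n\log n)$ edges, costing $O((|F|/\varepsilon^2)\log n) = O(\varepsilon^{-4}n\log^2 n)$; the $m$-estimation phase and the small-$m$ branch both fit within the same bound. Correctness holds with high probability by a union bound over the polynomially-many failure events tracked in Corollary~\ref{cor:disks-estimate}, Lemma~\ref{lemma:random-sampler}, and Lemmas~\ref{lemma:small:large:a}--\ref{lemma:happy}. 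The main obstacle is the $\varepsilon$-bookkeeping: one must check that the combined multiplicative slop from (a) the estimate $\overline{m}\in(1\pm\vartheta)m$, (b) the $(1\pm\vartheta)/m$ per-edge sampling skew, (c) the $(1\pm\vartheta)^3$ factor between $\nabla_{\!\!H}$ and $\psi\,\nabla_{\!\!\mathsf{G}}$ in Lemma~\ref{lemma:small:large:b}, and (d) the $(1-\varepsilon/6)$ loss of the black-box approximator on $H$ composes to $(1+\varepsilon)$ overall. This is a routine application of Observation~\ref{observation:eps:games} once $\vartheta$ is chosen as a suitable constant multiple of $\varepsilon$.
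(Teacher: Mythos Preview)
Your proposal is correct and follows essentially the same route as the paper's proof: branch on whether $m$ is $O(\varepsilon^{-2}n\log n)$ (then build the whole intersection graph via Lemma~\ref{lemma:sweepline} and run Bahmani~\textit{et~al.}\xspace directly) or $\Omega(\varepsilon^{-2}n\log n)$ (then sample via Lemma~\ref{lemma:random-sampler}, invoke Lemma~\ref{lemma:sampling}, and run Bahmani~\textit{et~al.}\xspace on $H$). The only cosmetic difference is that you make the case-distinction step explicit by first computing $\overline{m}=\tfrac12\sum_\mathsf{o} d_\mathsf{o}$, whereas the paper leaves this implicit (the degree estimates $d_\mathsf{o}$ are already computed inside the sampler of Lemma~\ref{lemma:random-sampler}, so your extra pass is redundant but harmless and within the stated time bound).
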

\begin{proof}
    The case of intersection graph having $O(\varepsilon^{-2}n\log n)$ edges
    can be handled directly by computing the whole intersection graph
    in $O(\varepsilon^{-2}n\log n)$ expected time (using \hyperref[lemma:sweepline]{Lemma~\ref*{lemma:sweepline}}).

    To handle the other case, we use \hyperref[lemma:random-sampler]{Lemma~\ref*{lemma:random-sampler}} to
    generate a graph $H=(\mathsf{V}, F)$ in
    $O(\varepsilon^{-4}n\log^2n + \varepsilon^{-2}n\log^2n\cdot\log\log n)$ expected
    time.  Since the intersection graph has $\Omega(\varepsilon^{-2}n\log n)$
    edges, using \hyperref[lemma:sampling]{Lemma~\ref*{lemma:sampling}}, it suffices to compute a
    $(1-\varepsilon/6)$ approximate densest subgraph of $H$, which can be
    computed by the algorithm of \cite{bgm-epgam-14} in
    $O(\varepsilon^{-4}n\log^2n)$ expected time.
\end{proof}

\section{Conclusions}

We presented two near-linear time approximation algorithms to compute
the densest subgraph on (implicit) geometric intersection graph of
disks. We conclude with a few open problems.  It seems that the
running time of the $(2+\varepsilon)$-approximation algorithm can be improved
to $O_\varepsilon(n\log n)$: the deepest point in the arrangement of the
disks and the densest subgraph are mutually bounded by a constant
factor, and a $(1+\vartheta)$-approximation of the deepest point in the
arrangement of the disks can be computed in $O(\vartheta^{-2}n\log n)$
time~\cite{ah-adrp-08}.

Are there implicit geometric intersection graphs, such as unit-disk
graphs or say, interval graphs, for which the \emph{exact} densest
subgraph can be computed in \emph{sub-quadratic time} (in terms of
$n$)?  Finally, maintaining the approximate densest subgraph in
sub-linear time (again in terms of $n$) under insertions and deletions
of objects, looks to be a challenging problem (in prior work on
general graphs an edge gets deleted or inserted, but in an
intersection graph a vertex gets deleted or inserted).

\printbibliography

\appendix

\section{Chernoff's inequality}

The following are standard forms of Chernoff's inequality, see
\cite{mr-ra-95}.

\begin{lemma}
    \label{lemma:chernoff:delta:big}%
    Let $X_1, \ldots, X_n$ be $n$ independent Bernoulli trials, where
    \begin{math}
        {\mathbb{P}}\mleft[  X_i = 1  \mright] =p_i,
    \end{math}
    and
    \begin{math}
        {\mathbb{P}}\mleft[  X_i = 0  \mright] =1- p_i,
    \end{math}
    for $i=1,\ldots, n$.  Let $X = \sum_{i=1}^{b} X_i$, and
    $\mu = {\mathbb{E}}\mleft[ \bigl. X \mright] = \sum_i p_i$.  For $\delta > 2e-1$, we have
    ${\mathbb{P}}\mleft[  \bigl. X > (1+\delta)\mu  \mright] < 2^{-\mu (1 + \delta)}$.
\end{lemma}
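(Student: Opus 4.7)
The plan is to apply the classical exponential-moment (Chernoff) technique: bound $\mathbb{E}[e^{tX}]$, apply Markov, optimize in $t$, and then use the hypothesis $\delta > 2e-1$ to massage the resulting expression into the clean form $2^{-\mu(1+\delta)}$.

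First, I would control the MGF of a single trial. Since $X_i$ is Bernoulli$(p_i)$, we have $\mathbb{E}[e^{tX_i}] = 1 + p_i(e^t-1) \leq \exp\!\bigl(p_i(e^t-1)\bigr)$, using the inequality $1+x \leq e^x$. By independence of the $X_i$'s, this multiplies to $\mathbb{E}[e^{tX}] \leq \exp\!\bigl(\mu(e^t-1)\bigr)$, where $\mu = \sum_i p_i$.

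Second, I would apply Markov's inequality to the non-negative variable $e^{tX}$ with $t>0$:
\begin{equation*}
    \mathbb{P}\bigl[X > (1+\delta)\mu\bigr]
    =
    \mathbb{P}\bigl[e^{tX} > e^{t(1+\delta)\mu}\bigr]
    \leq
    \frac{\exp(\mu(e^t-1))}{\exp(t(1+\delta)\mu)}.
\end{equation*}
Choosing the standard optimizer $t = \ln(1+\delta) > 0$ yields the classic form
\begin{equation*}
    \mathbb{P}\bigl[X > (1+\delta)\mu\bigr]
    \leq
    \left( \frac{e^\delta}{(1+\delta)^{1+\delta}} \right)^{\!\mu}.
\end{equation*}

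Third, I would use the hypothesis $\delta > 2e-1$ to show that the base is at most $2^{-(1+\delta)}$. This reduces to verifying $e^\delta \leq \bigl((1+\delta)/2\bigr)^{1+\delta}$. Since $\delta > 2e-1$ gives $(1+\delta)/2 > e$, we get $\bigl((1+\delta)/2\bigr)^{1+\delta} > e^{1+\delta} > e^\delta$, as desired. Raising to the $\mu$th power then yields the claimed $2^{-\mu(1+\delta)}$ bound, with strict inequality inherited from the strict hypothesis on $\delta$.

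There is essentially no obstacle here; the proof is textbook. The only place requiring a moment of care is the final step's verification that the threshold $2e-1$ is precisely what makes the ``base $e$'' bound collapse into a clean ``base $2$'' bound — but once one notes that $(1+\delta)/2 > e$ is exactly $\delta > 2e-1$, the algebra is immediate.
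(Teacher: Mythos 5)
Your proof is correct, and it is the standard exponential-moment argument. Note that the paper does not actually prove this lemma: it is stated in the appendix as a ``standard form'' of Chernoff's inequality with a citation to Motwani--Raghavan, so there is no in-paper proof to compare against. Your derivation --- MGF bound via $1+x\leq e^x$, Markov on $e^{tX}$, optimize at $t=\ln(1+\delta)$ to get $\bigl(e^\delta/(1+\delta)^{1+\delta}\bigr)^\mu$, then observe that $\delta>2e-1$ is equivalent to $(1+\delta)/2>e$, which makes the base drop below $2^{-(1+\delta)}$ --- is exactly the textbook proof that the cited reference would give, and the final step correctly isolates why the threshold $2e-1$ is the right one.
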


\begin{theorem}
    {\label{theo:special:h:i:e}}%
    Let $\varepsilon \in (0,1)$ be a parameter.  Let
    $X_1, \ldots, X_n \in [0,1]$ be $n$ independent random variables,
    let $X= \sum_{i=1}^n X_i$, and let $\mu = {\mathbb{E}}\mleft[ X \mright]$. We have that
    \begin{equation*}
        \displaystyle%
        {\mathbb{P}}\mleft[ \bigl.{X} \notin (1\pm \varepsilon) \mu  \mright]%
        \leq%
        2 \exp\mleft({ -\varepsilon^2 \mu / 3}\mright)
    \end{equation*}
\end{theorem}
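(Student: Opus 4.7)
The plan is to prove this by the standard moment generating function (Bernstein/Chernoff) method, applied in each tail and then combined by a union bound. Since the $X_i$ take values in $[0,1]$ rather than $\{0,1\}$, the first step is to reduce to the Bernoulli case: for any $x\in[0,1]$ and $t\in\mathbb{R}$, convexity of $s\mapsto e^{ts}$ gives $e^{tx}\leq (1-x)+x e^{t}$, so taking expectations, $\mathbb{E}[e^{tX_i}]\leq 1-p_i+p_i e^t\leq \exp(p_i(e^t-1))$, where $p_i=\mathbb{E}[X_i]$. By independence, $\mathbb{E}[e^{tX}]\leq \exp(\mu(e^t-1))$, which is exactly the MGF bound used in the Bernoulli version of Chernoff.

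For the upper tail, I would apply Markov's inequality to $e^{tX}$ with $t=\ln(1+\varepsilon)>0$: this gives
\begin{equation*}
    \mathbb{P}[X\geq (1+\varepsilon)\mu]\leq \frac{\mathbb{E}[e^{tX}]}{e^{t(1+\varepsilon)\mu}}\leq \left(\frac{e^{\varepsilon}}{(1+\varepsilon)^{1+\varepsilon}}\right)^{\mu}.
\end{equation*}
A standard calculus estimate, using $(1+\varepsilon)\ln(1+\varepsilon)\geq \varepsilon+\varepsilon^2/3$ for $\varepsilon\in(0,1)$ (check by differentiating $f(\varepsilon)=(1+\varepsilon)\ln(1+\varepsilon)-\varepsilon-\varepsilon^2/3$), yields the bound $\exp(-\varepsilon^2\mu/3)$. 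For the lower tail, I would choose $t=-\ln(1/(1-\varepsilon))<0$ (equivalently apply Markov to $e^{-tX}$) and obtain, via the analogous inequality $(1-\varepsilon)\ln(1-\varepsilon)\geq -\varepsilon+\varepsilon^2/2$, the stronger bound $\mathbb{P}[X\leq (1-\varepsilon)\mu]\leq \exp(-\varepsilon^2\mu/2)\leq \exp(-\varepsilon^2\mu/3)$. Summing the two tail bounds gives the claimed $2\exp(-\varepsilon^2\mu/3)$.

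The only non-routine step is the reduction from $[0,1]$-valued variables to the Bernoulli MGF bound; the convexity argument above is the standard trick and is the main thing worth writing out carefully. Everything else is the classical optimization over $t$ plus the elementary $\ln$ inequalities, which can simply be cited from \cite{mr-ra-95} as the lemma header already indicates.
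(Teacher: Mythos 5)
Your proof is correct and complete. Note that the paper does not actually prove this statement: it is placed in the appendix with only a citation to \cite{mr-ra-95}, so there is no ``paper's proof'' to compare against. Your write-up is exactly the standard derivation that such a citation points to: reduce $[0,1]$-valued variables to the Bernoulli moment-generating-function bound via convexity of $s\mapsto e^{ts}$ (so $e^{tx}\le(1-x)+xe^t$ and hence $\mathbb{E}[e^{tX}]\le\exp(\mu(e^t-1))$), optimize the Markov bound at $t=\ln(1+\varepsilon)$ for the upper tail and at $t=\ln(1-\varepsilon)$ for the lower tail, and then use the elementary inequalities $(1+\varepsilon)\ln(1+\varepsilon)\ge\varepsilon+\varepsilon^2/3$ and $(1-\varepsilon)\ln(1-\varepsilon)\ge-\varepsilon+\varepsilon^2/2$ to simplify to $\exp(-\varepsilon^2\mu/3)$ and $\exp(-\varepsilon^2\mu/2)$ respectively. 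Both calculus facts check out on $(0,1)$: for the first, $f(\varepsilon)=(1+\varepsilon)\ln(1+\varepsilon)-\varepsilon-\varepsilon^2/3$ satisfies $f(0)=f'(0)=0$ and $f'(\varepsilon)=\ln(1+\varepsilon)-2\varepsilon/3\ge0$ on $(0,1)$ (it is concave past $\varepsilon=1/2$ but $f'(1)=\ln2-2/3>0$); the second is immediate from the Taylor series of $-\ln(1-\varepsilon)$. The union bound over the two tails then yields the stated $2\exp(-\varepsilon^2\mu/3)$, with the open-interval reading of $(1\pm\varepsilon)\mu$ matching ``$X\le(1-\varepsilon)\mu$ or $X\ge(1+\varepsilon)\mu$''. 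In short, you have supplied a correct self-contained proof for what the paper leaves as a black-box citation; the one step genuinely worth spelling out, as you flag, is the convexity reduction from $[0,1]$-valued to Bernoulli, and you handle it properly.
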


\end{document}